\begin{document}

\title*{On the radar method in general-relativistic spacetimes}
\author{Volker Perlick}
\institute{TU Berlin, Institute of Theoretical Physics \\
Sekr. PN 7-1, Hardenbergstrasse 36, 10623 Berlin, Germany \\  
\texttt{vper0433@itp.physik.tu-berlin.de}}

\maketitle

\begin{abstract}
If a clock, mathematically modeled by a parametrized timelike curve
in a general-relativistic spacetime, is given, the radar method 
assigns a time and a distance to every event which is sufficiently
close to the clock. Several geometric aspects of this method are
reviewed and their physical interpretation is discussed.
\end{abstract}

\section{Introduction}
\label{sec:intro}
When Einstein was asked about the meaning of time he used to say: ``Time
is the reading of a clock''. Taking this answer seriously, one is forced 
to accept that time is directly defined only at the position of a clock;
if one wants to assign a time to events at some distance from the clock
one needs an additional prescription. As such prescription, Einstein
suggested the radar method with light rays.

Although originally designed for special relativity, the radar method
works equally well in general relativity. What one needs is a clock
in an arbitrary general-relativistic spacetime. Here and in the following,
our terminology is as follows. A \emph{general-relativistic spacetime}
is a 4-dimensional manifold $M$ with a smooth metric tensor field $g$ of 
Lorentzian signature and a time orientation; the latter means that a 
globally consistent distinction between future and past has been made.
A \emph{clock} is a smooth embedding $\gamma : t \mapsto \gamma (t)$ 
from a real interval into $M$ such that the tangent vector ${\dot{\gamma}}(t)$ 
is everywhere timelike with respect to $g$ and future-pointing.
This terminology is justified because we can interpret the value of the 
parameter $t$ as the reading of a clock. Note that our definition of a 
clock does not demand that ``its ticking be uniform'' in any sense. 
Only smoothness and monotonicity is required.  

The radar method assigns a time and a distance to an event $q$ in the 
following way. One has to send 
a light ray from an event on the curve $\gamma$, say $\gamma (t_1)$, to 
$q$ and receive the reflected light ray at another event on $\gamma$, say
$\gamma (t_2)$, see Figure \ref{fig:radneigh}. The \emph{radar time} $T$ 
and the \emph{radar distance} $R$ of the event $q$ with respect to $\gamma$
are then defined by 
\begin{equation}\label{eq:defT}
  T= \frac{1}{2} \big( t_2+t_1 \big) \; ,
\end{equation} 
\begin{equation}\label{eq:defR}
  R= \frac{1}{2} \big( t_2-t_1 \big) \; .
\end{equation} 
Here and in the following, ``light ray'' tacitly means ``freely propagating
light ray'', i.e., it is understood that there is no optical medium and that
mirrors or other appliances that deviate a light ray are not used. 
Adopting the standard formalism of general relativity, ``light ray'' is then 
just another word for ``lightlike geodesic of the spacetime metric $g$''.

\begin{figure}
    \psfrag{p}{$p$} 
    \psfrag{q}{$q$} 
    \psfrag{U}{$U$} 
    \psfrag{V}{$V$} 
    \psfrag{x}{$\gamma (t_1)$} 
    \psfrag{y}{$\gamma (t_2)$} 
\centerline{\epsfig{figure=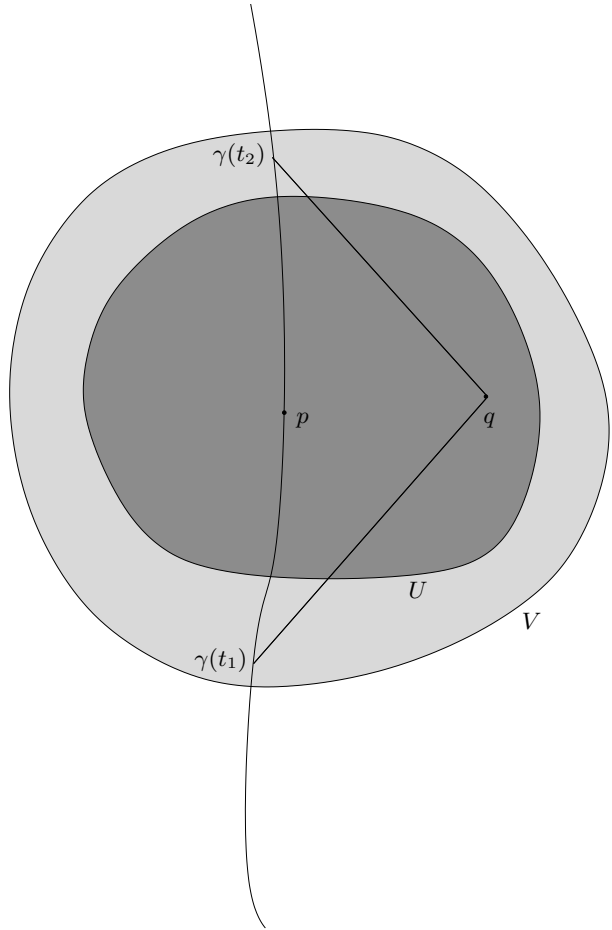, width=8cm}}
\caption{The radar method.}\label{fig:radneigh}
\end{figure}

In the following we discuss the radar method from a geometrical point of
view, reviewing some known results and formulating a few new ones. The radar
method has obvious relevance for the communication with satellites in the
solar system, because all such communication is made with the help of 
electromagnetic radiation that can be modeled, in almost all cases, in
terms of light rays. By sending a light ray to a satellite and receiving
the reflected signal the radar time $T$ and the radar distance $R$ of events
at the satellite are directly measurable quantities. Note that we do not need 
an experimentalist at the event $q$ where the light ray is reflected; a 
passive reflecting body, such as the LAGEOS satellites, would do. 

\section{Radar neighborhoods}
\label{sec:radneigh}
The radar time $T$ and the radar distance $R$ of an event $q$ with respect to 
a clock $\gamma$ are well-defined if there is precisely one future-pointing and 
precisely one past-pointing light ray from $q$ to $\gamma$. Neither existence 
nor uniqueness of such light rays is guaranteed.

It is possible that an event $q$ cannot be connected to $\gamma$ by any 
future-pointing (or any past-pointing) light ray. There are two physically
different situations in which this occurs: First, $q$ may be in a ``shadow'' 
of some obstacle that lies in the direction to $\gamma$; second, $q$ may be 
behind an ``event horizon'' of $\gamma$, see Figure \ref{fig:shadhor}.

\begin{figure}
    \psfrag{I}{I} 
    \psfrag{II}{II} 
    \psfrag{III}{III} 
    \psfrag{g}{$\gamma$} 
\centerline{\epsfig{figure=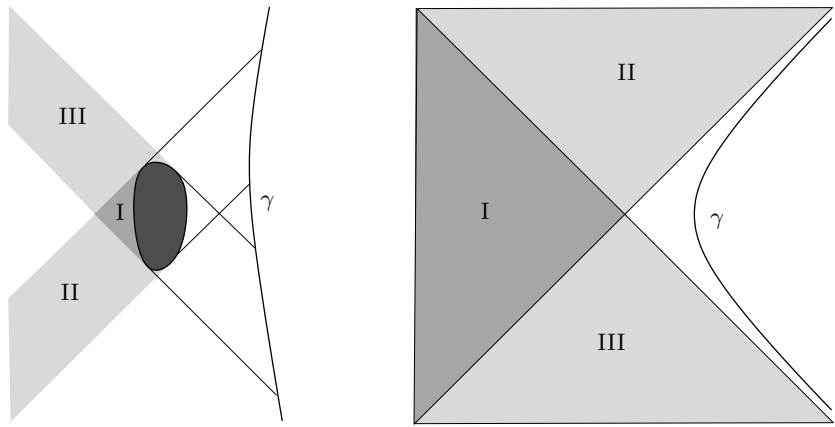, width=11cm}}
\caption{Shadows (left) and horizons (right) are obstacles for the radar method.
The example on the left shows a clock $\gamma$ in Minkowski spacetime with a subset
removed. The example on the right shows a clock $\gamma$ with uniform acceleration in
Minkowski spacetime. In both cases, events in the region II cannot be connected
to $\gamma$ by a future-pointing light ray, events in the region III cannot be
connected to $\gamma$ by a past-pointing light ray, and events in the region I
cannot be connected to $\gamma$ by any light ray.}\label{fig:shadhor}
\end{figure}

It is also possible that an event $q$ can be connected to $\gamma$ by two or
more future-pointing (or past-pointing) light rays. Whenever
the future light cone (or the past light cone) of $q$ has a caustic or a 
transverse self-intersection, it meets some timelike curves at least 
twice, see \cite{Perlick1996} or \cite{Perlick2004} for a detailed 
discussion. If the past light cone of $q$ intersects $\gamma$ at least 
twice, an observer at $q$ sees two or more images of $\gamma$, 
i.e., we are in a gravitational lensing 
situation. Figure \ref{fig:cone} shows an example of a past light cone
that has two intersections with appropriately chosen timelike curves,
as is geometrically evident from the picture. 

These observations clearly show that, in an arbitrary general-relativistic 
spacetime, the radar method does not work globally. However, it always
works locally. This is demonstrated by the following simple proposition.

\begin{proposition}\label{prop:radneigh}
Let $\gamma$ be a clock in an arbitrary general-relativistic 
spacetime and $p= \gamma (t_0)$ some point on $\gamma$. Then there are open
subsets $U$ and $V$ of the spacetime with $p \in U \subset V$ such that every
point $q$ in $U \setminus \mathrm{image}(\gamma)$ can be connected to the 
worldline of $\gamma$ by
precisely one future-pointing and precisely one past-pointing light ray 
that stays within $V$, see Figure $\ref{fig:radneigh}$. In this case, $U$ is called
a \emph{radar neighborhood} of $p$ with respect to $\gamma$.
\end{proposition}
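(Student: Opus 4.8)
The plan is to localize the question inside a convex normal neighborhood and reduce it to an elementary statement about one real function. First I would fix an open neighborhood $V$ of $p$ that is convex normal and small enough to contain no closed causal curve: any two of its points are then joined by a unique geodesic lying in $V$, depending smoothly on the endpoints, and Synge's world function $\sigma\colon V\times V\to\mathbb{R}$ (half the signed squared length of that geodesic) is smooth, with $\sigma(a,b)=0$ exactly when $a$ and $b$ are joined by a lightlike geodesic in $V$ (or $a=b$) and $\sigma(a,b)<0$ when they are joined by a timelike one — I adopt the signature convention in which timelike vectors have negative squared norm, the opposite convention being symmetric. Since $V$ is convex, a light ray starting at a point $q\in V$ that stays in $V$ and meets $\mathrm{image}(\gamma)$ does so at some $\gamma(t)$ with $t\in\gamma^{-1}(V)$ and there coincides with the $V$-geodesic from $q$ to $\gamma(t)$; hence such light rays are in bijection with the zeros of $h_q(t):=\sigma\big(q,\gamma(t)\big)$, the future-pointing ones being the zeros at which $\gamma(t)$ lies in the causal future of $q$ (inside $V$), the past-pointing ones those at which $\gamma(t)$ lies in its causal past.

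Next I would record a transversality remark that is the real content. By the standard formula for the gradient of the world function, $\dot h_q(t)=g\big(n,\dot\gamma(t)\big)$, where $n\in T_{\gamma(t)}M$ is the tangent at $\gamma(t)$ of the connecting geodesic from $q$; if $q\neq\gamma(t)$ this $n$ is lightlike and non-zero, and a non-zero lightlike vector is never $g$-orthogonal to a timelike vector, so $\dot h_q(t)\neq0$. Thus, for $q\notin\mathrm{image}(\gamma)$, every zero of $h_q$ is simple, and the sign of $\dot h_q$ at a zero decides whether the corresponding light ray is future- or past-pointing; in particular, as soon as $h_q$ is known to have exactly two zeros in the relevant window, one of them is automatically the sought future-pointing and the other the sought past-pointing light ray.

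It remains to count the zeros of $h_q$ for $q$ close to $p$. From the coincidence limits of $\sigma$ one gets, at $q=p=\gamma(t_0)$, that $h_p(t_0)=0$, $\dot h_p(t_0)=0$ and $\ddot h_p(t_0)=g_p\big(\dot\gamma(t_0),\dot\gamma(t_0)\big)<0$. By continuity there are therefore a $\delta>0$ and a neighborhood of $p$ on which $t\mapsto h_q(t)$ is strictly concave on $[t_0-\delta,t_0+\delta]$ and satisfies $h_q(t_0\pm\delta)<0$; a strictly concave function has at most two zeros on an interval, which gives the uniqueness part. To get existence, i.e. that $h_q$ actually becomes positive somewhere in $(t_0-\delta,t_0+\delta)$ when $q\notin\mathrm{image}(\gamma)$, I would argue by contradiction: if $h_q\leq0$ throughout, the connected arc $\gamma\big([t_0-\delta,t_0+\delta]\big)$ lies in the union of the causal future and causal past of $q$, hence — these being closed in $V$ and, as $V$ has no closed causal curve, disjoint away from $q$, which is not on $\gamma$ — entirely in one of them; but then $q$ lies in the causal past of $\gamma(t_0-\delta)$ or in the causal future of $\gamma(t_0+\delta)$, each of which is impossible once the neighborhood has been shrunk into the open set $I^+\big(\gamma(t_0-\delta)\big)\cap I^-\big(\gamma(t_0+\delta)\big)$, which contains $p$. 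So $h_q$ has exactly two simple zeros $t_1<t_2$ in $(t_0-\delta,t_0+\delta)$, with $\dot h_q(t_1)>0>\dot h_q(t_2)$ by concavity, and by the transversality remark $\gamma(t_1)$ and $\gamma(t_2)$ are the emission and reception events for $q$.

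The step I expect to be the main obstacle is not any of the above but the global-in-$t$ bookkeeping: showing that $t_1$ and $t_2$ exhaust the light rays from $q$ that stay in $V$, i.e. that $h_q$ has no zero at a parameter far from $t_0$. I would handle this by shrinking $V$, and then the final neighborhood $U\subset V$, far enough: here one uses that $\gamma$ is an \emph{embedding}, so that $\gamma(t_n)\to p$ forces $t_n\to t_0$ and hence small neighborhoods of $p$ pull back under $\gamma$ into small neighborhoods of $t_0$, together with the fact that $\sigma\big(p,\gamma(t)\big)<0$ — so $h_p(t)\neq0$ — for every $t\neq t_0$ with $\gamma|_{[t_0,t]}$ inside $V$, because the timelike curve $\gamma$ then joins $p$ and $\gamma(t)$ within $V$. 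Arranging the several shrinkings so that every $V$-staying light ray emanating from a point of $U$ is captured by the window $(t_0-\delta,t_0+\delta)$ is routine but needs care; once it is done, $U$ is the desired radar neighborhood and $\gamma(t_1),\gamma(t_2)$ are precisely the events entering \eqref{eq:defT} and \eqref{eq:defR}.
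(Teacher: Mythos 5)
Your proposal is correct and follows the same route as the paper, which simply invokes a convex normal neighborhood $V$ and asserts that every sufficiently small $U\subset V$ works; you have carried out the verification the paper leaves implicit, by reducing the count of connecting light rays to the zeros of $t\mapsto\sigma\bigl(q,\gamma(t)\bigr)$ and using strict concavity near the coincidence limit together with the embedding property of $\gamma$ to control the window in $t$. The transversality remark and the connectedness argument for existence are both sound, so no gap remains beyond the bookkeeping you already flag as routine.
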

To prove this, we just have to recall that every point in a general-relativistic
spacetime admits a convex normal neighborhood, i.e., a neighborhood $V$  
such that any two points in $V$ can be connected by precisely one geodesic 
that stays within $V$. Having chosen such a $V$, it is easy to verify that every 
sufficiently small neighborhood $U$ of $p$ satisfies the desired property.     

\begin{figure}
    \psfrag{g}{$\gamma$} 
    \psfrag{R}{$R= \mathrm{constant}$} 
    \psfrag{T}{$T= \mathrm{constant}$} 
\centerline{\epsfig{figure=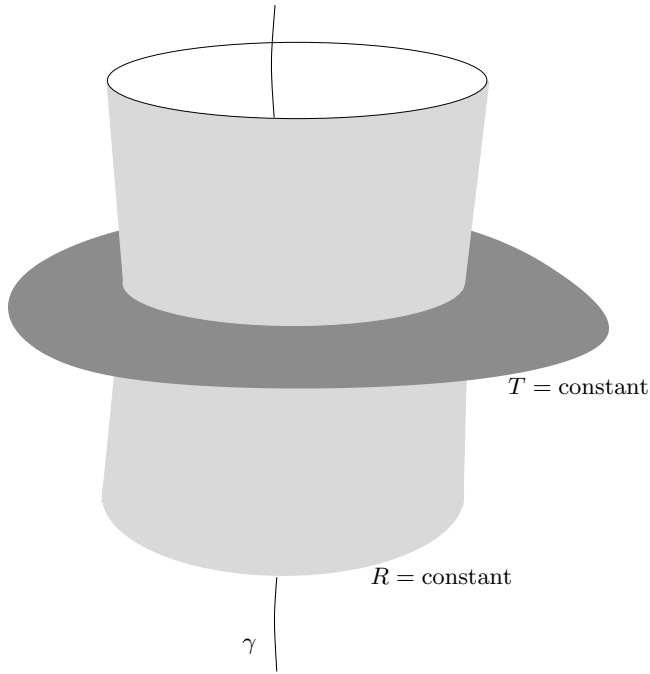, width=8cm}}
\caption{Hypersurfaces $T=\mathrm{constant}$ and hypercylinders $R= \mathrm{constant}$
defined by the radar method.}
\label{fig:radco}
\end{figure}

As an aside, we mention that the existence of radar neighborhoods, in the sense
of Proposition \ref{prop:radneigh}, was chosen as one of the
axioms in the axiomatic approach to spacetime theory by Ehlers, Pirani and 
Schild \cite{EhlersPiraniSchild1972}.

If $U$ is a radar neighborhood, the radar time $T$ and the radar distance $R$
are well-defined functions on $U \setminus \mathrm{image}(\gamma)$. By continuous extension
onto the image of $\gamma$ one gets smooth hypersurfaces $T= \mathrm{constant}$ that 
intersect $\gamma$ orthogonally; hence, they are spacelike near $\gamma$. Note, 
however, that they need not be spacelike on the whole radar neighborhood. The 
hypersurfaces $R= \mathrm{constant}$ have a cylindrical topology, see 
Figure \ref{fig:radco}. Incidentally, if one replaces (\ref{eq:defT}) by 
$T = p \, t_1 + (1-p) \, t_2$ with any number $p$ between 0 and 1, each hypersurface
$T= \mathrm{constant}$ gets a conic singularity at the intersection point with 
$\gamma$. This clearly shows that the choice of the factor 1/2 is the most natural
and the most convenient one. (If one allows for \emph{direction-dependent} factors,
one can get smooth hypersurfaces with factors other than 1/2. This idea, which 
however seems a little bit contrived, was worked out by Havas \cite{Havas1987} 
where the reader can find more on the ``conventionalism debate'' around the factor 
1/2.)

By covering $\gamma$ with radar neighborhoods $U$ (and the pertaining convex normal
neighborhoods $V$), it is easy to verify that $T$ and $R$ coincide on the intersection
of any two radar neighborhoods. Hence, $T$ and $R$ are well-defined on some tubular 
neighborhood of $\gamma$. We will now investigate how large this neighborhood can
be for the case of a clock moving in the Solar System, the latter being modeled
by the Schwarzschild spacetime around the Sun. 

To that end we consider the Schwarzschild spacetime around a non-transparent spherical
body of radius $r_*$ and mass $m$. (The radius is measured in terms of the radial
Schwarzschild coordinate and for the mass we use geometrical units, i.e., the
Schwarzschild radius is $2m$.) Using the standard deflection formula for light rays
in the Schwarzschild spacetime, the following result can be easily verified. If a bundle of 
light rays comes in initially parallel from infinity, the rays that graze the surface
of the central body will meet the axis of symmetry of the bundle at radius 
\begin{equation}\label{eq:defrf}
  r_f= \frac{r_*}{4 \frac{m}{r_*} + O\big( (\frac{m}{r_*})^2 \big)}
  \approx \frac{r_*^2}{4 m}
\end{equation}
see Figure \ref{fig:rf}. This radius $r_f$ is sometimes called the \emph{focal length}
of a non-transparent body of radius $r_*$ and mass $m$. If we insert the values of
our Sun we find
\begin{equation}\label{eq:rfsun}
r_f \approx 550 \; \mathrm{a.u.} 
\end{equation}
where 1 a.u. = 1 astronomical unit is the average distance from the Earth to the Sun. 
From any event at $r<r_f$, the future-pointing and past-pointing light rays spread 
out without intersecting each other. They cover the whole space $r> r_*$ with the 
exception of those points that lie in the ``shadow'' cast by the central 
body, see Figure \ref{fig:rf}. By contrast, light rays from an event at $r>r_f$ do 
intersect; the past light cone of such an event is shown in Figure \ref{fig:cone}.

\begin{figure}
    \psfrag{f}{$r_f$} 
\centerline{\epsfig{figure=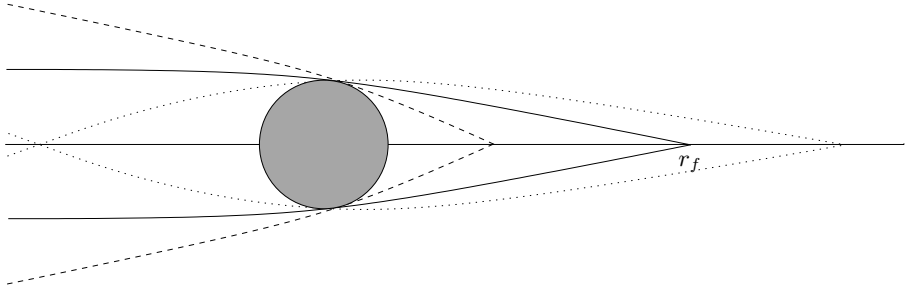, width=12cm}}
\caption{The focal length $r_f$ of a non-transparent spherical body.}
\label{fig:rf}
\end{figure}

As a consequence, for a clock $\gamma$ moving arbitrarily in the region $r> r_*$, an 
event $q$ at a radius $r$ with $r_*<r<r_f$ can be connected to the worldline of 
$\gamma$ by at most one future-pointing and at most one past-pointing light ray.  
We shall make the additional assumption that $\gamma$ is inextendible and 
approaches neither the surface of the central body nor infinity in the future or
in the past. This assures that there are no event horizons for $\gamma$. As a consequence,
any event $q$ at radius $r$ with $r_* < r < r_f$ can be connected to $\gamma$ by 
precisely one future-pointing and precisely one past-pointing light ray unless $\gamma$ 
moves through the shadow cast by the central body for light rays issuing from $q$.

An event $q$ at radius $r>r_f$, on the other hand, may be connected to the
worldline of a clock by several future-pointing (or past-pointing) light
rays. This is geometrically evident from Figure \ref{fig:cone}.

\begin{figure}
    \psfrag{f}{$r_f$} 
\centerline{\epsfig{figure=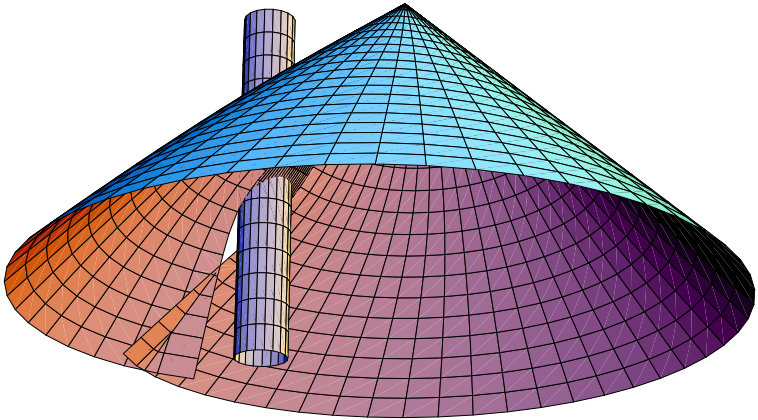, width=10cm}}
\caption{Past light cone of an event at radius $r > r_f$ in the field of a non-transparent
gravitating body. The ``chimney'' is the world tube of the gravitating body. The ``shadow''
is clearly seen as a gap in the light cone. In this $2+1$-dimensional picture the light cone
forms a transverse self-intersection. If the gravitating body is spherical symmetric, taking
the missing spatial dimension into account shows that actually a sphere's worth of light
rays is focused at each point of the intersection set.}
\label{fig:cone}
\end{figure}

So for any clock in the Solar System the radar method assigns a unique time $T$ and 
a unique distance $R$ to any event at radius $r<r_f$, with the exception of those events
for which the clock lies in the shadow of the central body. Note that for all existing 
spacecraft the distance from the Sun is considerably smaller than $r_f=550$ a.u. 
(In October 2005, the spacecraft farthest away from the Sun was Pioneer 10 with a 
distance of 89 a.u.)

The idea of sending a spacecraft to $r> 550$ a.u. was brought forward by Eshleman
\cite{Eshleman1979} in 1979. What makes this idea attractive is the possibility of
observing distant light sources strongly magnified by the focusing effect of the 
gravitational field of the Sun, see again Figure \ref{fig:rf}. For a detailed 
discussion of the perspectives of such a mission see 
Turyshev and Andersson \cite{TuryshevAndersson2003}. 

It should be emphasized that our consideration applies only to a non-transparent body.
If the central body is transparent, light rays passing through the central region of the 
body are focussed at a radius that is much smaller than the $r_f$ given above. If the 
interior is modeled by a perfect fluid with constant density, one finds for the Sun a 
focal length of 30 a.u., in comparison to the 550 a.u. for the non-tranparent case, see
Nemiroff and Ftaclas \cite{NemiroffFtaclas1997}. A transparent Sun is a reasonable
model for neutrino radiation (which travels approximately, though not precisely, on lightlike
geodesics) and for gravitational radiation (which travels along lightlike geodesics if 
modeled as a linear perturbation of the Schwarzschild background). So, the focusing 
at 30 a.u. might have some futuristic perspective in view of neutrino astronomy and 
gravitational wave astronomy.  
 
\section{Characterization of standard clocks with the radar method}
\label{sec:standard}
If we reparametrize the curve $\gamma$, the hypersurfaces $T=\mathrm{constant}$ and $R=\mathrm{constant}$
change. Therefore, the radar method can be used to characterize distinguished parametrizations
of worldlines, i.e., distinguished clocks. In a general-relativistic spacetime, the standard clock
parametrization is defined by the condition
\begin{equation}\label{eq:standard}
\frac{d}{dt}  g \big( \dot{\gamma} (t) , \dot{\gamma} (t) \big) =0
\end{equation}
where $g$ is the spacetime metric. This defines a parametrization along any timelike curve that 
is unique up to affine transformations, $t \mapsto at+b$ with real constants $a$ and $b$. As we 
restrict to future-pointing parametrizations, $a$ must be positive. Then the choice of $a$ 
determines the unit and the choice of $b$ determines the zero on the dial. By choosing $a$
appropriately, we can fix the unit of a standard clock such that $g \big( \dot{\gamma} (t) , 
\dot{\gamma} (t) \big) = -1$. Then the parameter of the clock is called \emph{proper time}.
Note that under an
affine reparametrization $t \mapsto at + b$ the radar time and the radar distance transform
according to $T \mapsto aT+b$ and $R \mapsto aR$, i.e., the hypersurfaces $T=\mathrm{constant}$ 
and $R=\mathrm{constant}$ are relabeled but remain unchanged.  

With the help of the radar method one can formulate an operational prescription that allows to
test whether a clock is a standard clock. This prescription is now briefly reviewed, for details and
proofs see \cite{Perlick1987}. Here we assume that the test is made in a general-relativistic
spacetime; in \cite{Perlick1987} the more general case of a Weylian spacetime is considered.

To test whether a clock $\gamma$ behaves like a standard clock in a particular
event $\gamma (t_0)$, we emit at this event two freely falling particles in spatially
opposite directions. These two freely falling particles are mathematically modeled by timelike 
geodesics $\mu$ and $\overline{\mu}$, and the condition that they are emitted in spatially 
opposite directions means that the future-oriented tangent vector to $\gamma$ is a convex 
linear combination of the future-oriented tangent vectors to $\mu$ and $\overline{\mu}$.
If we restrict to a radar neighborhood of $\gamma (t_0)$, the radar method assigns a 
time $T$ and a distance $R$ to each event on $\mu$, and a time
$\overline{T}$ and a distance $\overline{R}$ to each event on $\overline{\mu}$, see 
Figure \ref{fig:standard}. These quantities can be actually measured provided that the two 
freely falling particles are reflecting objects. From these measured quantities we can 
calculate the differential quotients $dR/dT$ and $d^2R/dT^2$ along $\mu$ and the differential
quotients $d\overline{R}/d\overline{T}$ and $d^2\overline{R}/d\overline{T}{}^2$ along $\overline{\mu}$,
i.e., the \emph{radar velocity} and the \emph{radar acceleration} of the two freely falling
particles. 
It is shown in \cite{Perlick1987} that the standard clock condition (\ref{eq:standard}) holds 
at $t = t_0$ (which corresponds to $T= \overline{T}=t_0$) if and only if
\begin{equation}\label{eq:test}
  \frac{\frac{d^2R}{dT^2}}{1- \big( \frac{dR}{dT} \big) ^2} \; \Big| _{T=t_0} \: = \: - \:   
  \frac{\frac{d^2\overline{R}}{d\overline{T}{}^2}}{1- 
  \big( \frac{d\overline{R}}{d\overline{T}} \big) ^2} \; \Big| _{\overline{T}=t_0} \: .
\end{equation}
This prescription can be used, in particular, to directly test whether atomic clocks
are standard clocks. All experiments so far are in agreement with this hypothesis, but
a direct test has not been made.

\begin{figure}
    \psfrag{z}{$\gamma (t_0)$} 
    \psfrag{m}{$\mu$} 
    \psfrag{n}{$\overline{\mu}$} 
    \psfrag{M}{$(R,T)$} 
    \psfrag{N}{$(\overline{R},\overline{T})$} 
\centerline{\epsfig{figure=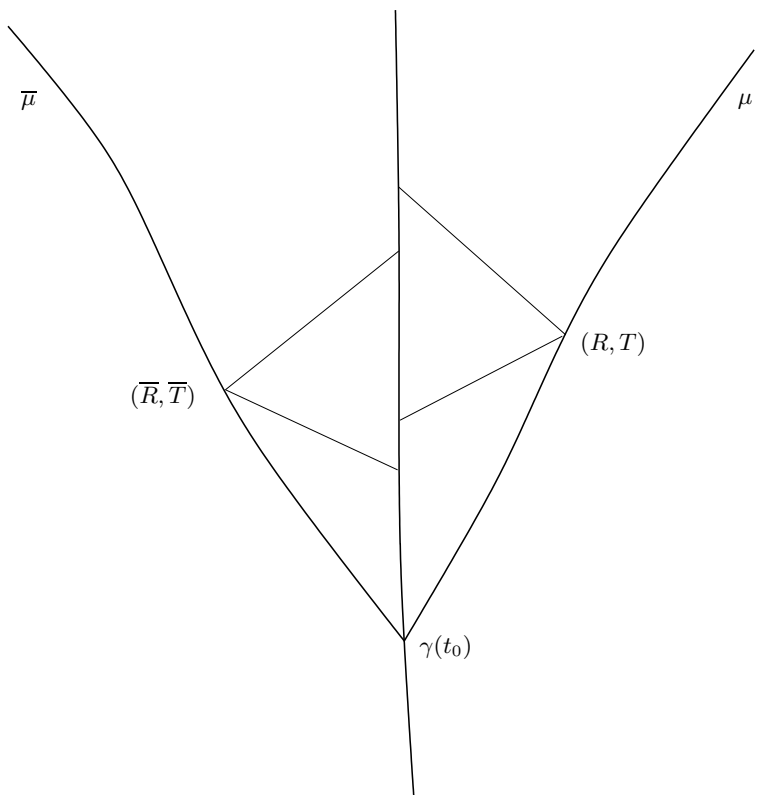, width=10cm}}
\caption{Testing a clock as a standard clock with the radar method.}
\label{fig:standard}
\end{figure}

There are alternative characterizations of standard clocks by Marzke and Wheeler
\cite{MarzkeWheeler1964} and Kundt and Hoffman \cite{KundtHoffman1962} which also
work with light rays and freely falling particles. The advantages of the method reviewed
here in comparison to these two older methods are outlined in \cite{Perlick1994}.

\section{Radar coordinates, optical coordinates, and Fermi coordinates}
\label{sec:Fermi}

Given any clock $\gamma$ in any general-relativistic spacetime, the radar method assigns, 
as outlined above, to each event $q$ in some tubular neighborhood of $\gamma$ a radar time 
$T$ and a radar distance $R$. In order to get a coordinate system (\emph{radar coordinates}) 
on this tubular neighborhood, we may add two angular coordinates $\vartheta$ and $\varphi$ 
in the following way. Choose at each point $\gamma (t)$ an orthonormal tetrad 
$(E_0(t),E_1(t),E_2(t),E_3(t))$, smoothly dependent on $t$, such that $E_0(t)$ is 
future-pointing and tangent to $\gamma$. To each event $q$ consider the past-oriented 
light ray, in the notation of Figure \ref{fig:radneigh}, from $\gamma (t_2)$ to $q$. 
The tangent vector to this light ray at $\gamma (t_2)$ must be proportional to a vector 
of the form $-E_0(t_2)+ \mathrm{cos} \, \varphi \, \mathrm{sin} \, \vartheta \, E_1 (t_2) 
+ \mathrm{sin} \, \varphi \, \mathrm{sin} \, \vartheta \, E_2 (t_2)+ 
\mathrm{cos} \, \vartheta \, E_3 (t_2)$ which defines $\vartheta$ and $\varphi$. 
Thus, $\vartheta$ and $\varphi$ indicate at which point on the sky of $\gamma$ 
the event $q$ is seen. Just as with ordinary spherical coordinates, there are coordinate
singularities at $R= 0$ and at $\mathrm{sin}\,\vartheta =0$, and $\varphi$ has to
be identified with $\varphi + 2 \pi$. Apart from these obvious pathologies, 
the radar coordinates $(T,R,\vartheta,\varphi)$ form a well-defined coordinate
system on some tubular neighborhood of $\gamma$. There are two possibilities of 
modifying the radar coordinates without changing the information contained in them.  
First, one may replace $T$ and $R$ by $t_1$ and $t_2$, according to (\ref{eq:defT}) 
and (\ref{eq:defR}), and use the modified radar coordinates $(t_1,t_2,\vartheta,\varphi)$. 
Second, one may switch to Cartesian-like coordinates $(T,x,y,z)$ by introducing 
$x= R \, \mathrm{cos} \, \varphi \, \mathrm{sin} \, \vartheta$, 
$y = R \, \mathrm{sin} \, \varphi \, \mathrm{sin} \, \vartheta$,
$z = R \, \mathrm{cos} \, \vartheta$ in order to remove the coordinate
singularities at $R=0$ and $\mathrm{sin} \, \vartheta = 0$. Radar coordinates 
have been used as a tool, e.g., in the axiomatic approach to spacetime theory
of Schr{\"o}ter and Schelb \cite{Schroeter1988, SchroeterSchelb1992, Schelb1992}.

We will now compare radar coordinates with two other kinds of coordinate systems that 
can be introduced near the worldline of any clock $\gamma \,$: ``optical coordinates'' and 
``Fermi coordinates''. We will see that there are some similarities but also major 
differences between these three types of coordinate systems. For an alternative 
discussion of optical coordinates and Fermi coordinates see Synge \cite{Synge1960}.

\emph{Optical coordinates} were introduced by Temple \cite{Temple1938}. The alternative
name \emph{observational coordinates} is also common, see Ellis et al.
\cite{Ellis1980, EllisNelMaartensStoegerWhitman1985}. They assign to the 
event $q$ the four-tuple $(t_2,s,\vartheta,\varphi)$, 
where $t_2$, $\vartheta$ and $\varphi$ have the same meaning as above and $s$ is the 
``affine length'' (or ``projected length'') along the past-oriented light ray from 
$\gamma (t_2)$ to $q$. Using the exponential map $\mathrm{exp}$ determined by the 
spacetime metric, $s$ can be defined by the equation 
\begin{equation}\label{eq:expo}
q \: = \: \mathrm{exp}_{\gamma (t_2)} \Big( s \big( -E_0(t_2)+
\mathrm{cos} \, \varphi \, \mathrm{sin} \, \vartheta \, E_1 (t_2) + 
\mathrm{sin} \, \varphi \, \mathrm{sin} \, \vartheta \, E_2 (t_2)+ 
\mathrm{cos} \, \vartheta \, E_3 (t_2)\big)\Big) \: . 
\end{equation}
Just as radar coordinates, optical coordinates are well-defined, apart from the obvious 
coordinate singularities at $s=0$ and $\mathrm{sin}\,\vartheta =0$ on some tubular 
neighborhood of $\gamma$. The boundary of
this neighborhood is reached when the past light cone of an event on $\gamma$ develops a 
caustic or a transverse self-intersection. (Beyond such points, the optical coordinates are
multi-valued. This does not mean that they are useless there; however, they do not define
a coordinate system in the usual sense.) As radar coordinates require a similar condition not 
only on past light cones but also on future light cones, the domain of radar coordinates 
is always contained in the domain of optical coordinates. Also, there is an important 
advantage of optical coordinates in view of calculations: Optical coordinates only require 
to calculate the past-pointing lightlike geodesics issuing from points on $\gamma$; radar 
coordinates require to calculate past-pointing and future-pointing lightlike geodesics 
from points on $\gamma$, and to determine their intersections. Nonetheless, for applications 
in the solar system radar coordinates are advantageous because they have an operational 
meaning. In principle, optical coordinates also have an operational meaning: $(t_2, \vartheta ,
\varphi )$ are the same as in radar coordinates, and for the affine (or projected)
length $s$ a prescription of measurement was worked out by Ruse \cite{Ruse1933} after 
this length measure had been introduced mathematically by Kermack, McCrea and Whittaker
\cite{KermackMcCreaWhittacker1932}. However, this prescription requires the distribution
of assistants with rigid rods along each light ray issuing from $\gamma$ into the past which 
is, of course, totally unrealistic in an astronomical situation. In this sense, optical
coordinates have an operational meaning only in principle but not in practice, whereas
radar coordinates have an operational meaning both in principle and in practice, at least
in the Solar System. In cosmology, however, this is no longer true. Then the radar 
coordinates, just as the optical coordinates, have an operational meaning only in 
principle but not in practice: Sending a light ray to a distant galaxy and waiting for
the reflected ray is a ridiculous idea. As a matter of fact, optical coordinates are
much more useful in cosmology than radar coordinates. Although $s$ is not directly 
measurable, it is related in some classes of spacetimes to other distance measures, 
such as the redshift or the angular diameter distance, which can be used to replace 
$s$. For applications of optical coordinates in cosmology see 
\cite{EllisNelMaartensStoegerWhitman1985}. As the simplest example, one may consider
optical coordinates and radar coordinates in Robertson-Walker spacetimes, cf. Jennison 
and McVittie \cite{JennisonMcVittie1975} and Fletcher \cite{Fletcher1994}.

We now turn to \emph{Fermi coordinates} which were introduced by Enrico Fermi \cite{Fermi1922}. 
Let us recall how they are defined. As above, we have to choose along $\gamma$ an orthonormal 
tetrad $(E_0(t),E_1(t),E_2(t),E_3(t))$ with $E_0$ tangent to $\gamma$. Following Fermi, we 
require that the covariant derivative of each spatial axis $E_{\mu}$ ($\mu=1,2,3$) is 
parallel to the tangent of $\gamma$. This \emph{Fermi transport} law can be operationally 
realized by means of gyroscope axes \cite{MisnerThorneWheeler1973} or Synge's \emph{bouncing 
photon} method \cite{Synge1960,Pirani1965}. (Actually, the construction below can be carried 
through equally well if the spatial axes are not Fermi parallel. What is needed is only smooth 
dependence on the foot-point, just as with radar coordinates and optical coordinates.)
Then every event $q$ in a sufficiently small tubular neighborhood of $\gamma$ can 
be written in the form 
\begin{equation}\label{eq:expf}
q \: = \: \mathrm{exp}_{\gamma (\tau )} \Big( \, \rho \, \big(
\mathrm{cos} \, \phi \, \mathrm{sin} \, \theta \, E_1 (\tau) + 
\mathrm{sin} \, \phi \, \mathrm{sin} \, \theta \, E_2 (\tau)+ 
\mathrm{cos} \, \theta \, E_3 (t_2)\big)\Big) \: . 
\end{equation}
The Fermi coordinates of the point
$q$ are the four numbers $(\tau , \rho , \theta , \phi )$. Thus, each surface
$\tau = \mathrm{constant}$ is generated by the geodesics issuing orthogonally from the
point $\gamma (\tau )$. The distance $\rho$ is defined analogously to the affine length
in the optical coordinates, but now along spacelike rather than lightlike geodesics. 
Also, the angular coordinates $\theta$ and $\phi$ are analogous to the angular coordinates
$\vartheta$ and $\varphi$ in the radar and optical coordinates, but now they indicate 
the direction of a spacelike vector, rather than the direction of the spatial part 
of a lightlike vector. Just as the other two 
coordinate systems, Fermi coordinates are well-defined only on some tubular neighborhood
of $\gamma$. There are two reasons that limit this neighborhood. First, a hypersurface 
$\tau = \mathrm{constant}$ might develop caustics or self-intersections. Second, two
hypersurfaces $\tau = \mathrm{constant}$ might intersect. In contrast to radar coordinates,
Fermi coordinates are insensitive to reparametrizations of $\gamma$ (apart from the fact
that the surfaces $\tau = \mathrm{constant}$ are relabeled). The difficulty involved in
their calculation is the same as for optical coordinates which is considerably less
than for radar coordinates, as already mentioned above. The essential drawback of Fermi
coordinates is in the fact that they have absolutely no operational meaning: None of the
four coordinates $\tau$, $\rho$, $\theta$ and $\phi$ can be measured because there is
no prescription for physically realizing a spacelike geodesic orthogonal to a worldline.

In spite of this fact, Fermi coordinates have found many applications because sometimes
physically relevant effects can be conveniently calculated in terms of Fermi coordinates.
For a plea in favor of Fermi coordinates, in comparison to radar coordinates, see
Bini, Lusanna and Mashhoon \cite{BiniLusannaMashhoon2005}. In Minkowski spacetime, e.g.,
it is fairly difficult to calculate the radar time hypersurfaces $T =  \mathrm{constant}$
for an accelerating clock. By contrast, the Fermi time hypersurfaces $\tau = \mathrm{constant}$
are just the hyperplanes perpendicular to the worldline which are quite easy to determine.
(Of course, for an accelerating clock these hyperplanes necessarily intersect, so they
cannot form a smooth foliation on all of Minkowski spacetime.) It is an interesting
question to ask for which clocks the radar time hypersurfaces $T = \mathrm{constant}$
coincide with the Fermi time hypersurfaces $\tau = \mathrm{constant}$. For standard
clocks (recall Section \ref{sec:standard}) in Minkowski spacetime, Dombrowski, Kuhlmann
and Proff \cite{DombrowskiKuhlmannProff1984} have found the following answer.

\begin{proposition}\label{prop:coin}
Let $\gamma$ be a standard clock in Minkowski 
spacetime. Then the following two statements are equivalent.
\begin{itemize}
\item[\emph{(a)}] The radar time hypersurfaces $T = \mathrm{constant}$ are hyperplanes, i.e., 
they coincide with the Fermi time hypersurfaces $\tau = \mathrm{constant}$.
\item[\emph{(b)}] The $4$-acceleration of $\gamma$ is constant $($i.e., 
a Fermi-transported vector along $\gamma)$.
\end{itemize}
\end{proposition}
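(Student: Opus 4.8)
The plan is to exploit the affine structure of Minkowski space $\mathbb{R}^{1,3}$ to write the radar-time hypersurface $\{T=c\}$ explicitly, after which both implications reduce to a short Taylor computation along $\gamma$. Throughout, parametrize $\gamma$ by proper time, so $\dot\gamma\cdot\dot\gamma=-1$ (harmless, since affine reparametrizations only relabel the hypersurfaces), and write $a=\ddot\gamma$ for the $4$-acceleration.

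First I would describe $\{T=c\}$ as a one-parameter family of $2$-spheres. By definition, an event $q$ with $T(q)=c$ and $R(q)=s>0$ lies simultaneously on the future light cone of $\gamma(c-s)$ and on the past light cone of $\gamma(c+s)$. Put $M_s=\tfrac12\bigl(\gamma(c+s)+\gamma(c-s)\bigr)$ and $w_s=\tfrac12\bigl(\gamma(c+s)-\gamma(c-s)\bigr)$. Adding and subtracting the two null conditions $\bigl(q-\gamma(c\pm s)\bigr)^2=0$ gives at once that this intersection is
\[
\Sigma_{c,s}=\bigl\{\,q\ :\ (q-M_s)\cdot w_s=0,\ \ (q-M_s)^2=-w_s^2\,\bigr\},
\]
and, since $w_s$ is timelike for small $s$, $\Sigma_{c,s}$ is a round $2$-sphere of radius $\sqrt{-w_s^2}$ centered at $M_s$ inside the spacelike hyperplane $M_s+w_s^{\perp}$. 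Hence, near $\gamma(c)$, the hypersurface $\{T=c\}$ is the union of the spheres $\Sigma_{c,s}$, $s>0$, together with the point $\gamma(c)$.

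Next I would reduce statement (a) to a condition on $w_s$. Since $\{T=c\}$ meets $\gamma$ orthogonally at $\gamma(c)$ (Section~\ref{sec:radneigh}), it is a hyperplane near $\gamma(c)$ if and only if it is contained in the affine hyperplane $H_c:=\{q:(q-\gamma(c))\cdot\dot\gamma(c)=0\}$, i.e.\ if and only if $\Sigma_{c,s}\subset H_c$ for all small $s$. Writing a point of $\Sigma_{c,s}$ as $q=M_s+v$ with $v$ on the round sphere $\{v\in w_s^{\perp}:v^2=-w_s^2\}$, the requirement $q\in H_c$ becomes $v\cdot\dot\gamma(c)=-(M_s-\gamma(c))\cdot\dot\gamma(c)$ for all such $v$; since a linear functional that is constant on a round $2$-sphere of positive radius in a Euclidean $3$-space must vanish, this forces $\dot\gamma(c)\perp w_s^{\perp}$, i.e.\ $w_s\parallel\dot\gamma(c)$, and then also $(M_s-\gamma(c))\cdot\dot\gamma(c)=0$. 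Conversely, these two conditions plainly give $\Sigma_{c,s}\subset H_c$. Thus (a) is equivalent to the statement that, for every $c$ and all small $s$,
\[
w_s\parallel\dot\gamma(c)\qquad\text{and}\qquad(M_s-\gamma(c))\cdot\dot\gamma(c)=0 .
\]

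Finally I would link this with (b). For (a)$\Rightarrow$(b): Taylor expansion gives $w_s=s\,\dot\gamma(c)+\tfrac{s^3}{6}\,\dddot\gamma(c)+o(s^3)$, so $w_s\parallel\dot\gamma(c)$ for all small $s$ forces the component of $\dddot\gamma(c)$ orthogonal to $\dot\gamma(c)$ to vanish, i.e.\ $\dddot\gamma\parallel\dot\gamma$ along all of $\gamma$; differentiating $\dot\gamma\cdot\dot\gamma=-1$ twice gives $a\cdot\dot\gamma=0$ and $\dddot\gamma\cdot\dot\gamma=-a\cdot a$, whence $\dddot\gamma=(a\cdot a)\,\dot\gamma$, which is exactly Fermi transport of $a$ along $\gamma$; that is (b). (In particular $\tfrac{d}{d\tau}(a\cdot a)=2\,\dddot\gamma\cdot a=0$, so the coefficient is constant.) For (b)$\Rightarrow$(a): (b) reads $\dddot\gamma=\alpha\,\dot\gamma$ with $\alpha=a\cdot a$ constant, so $u:=\dot\gamma$ solves the autonomous linear ODE $\ddot u=\alpha u$; then $w_s=\int_0^s\tfrac12\bigl(u(c+r)+u(c-r)\bigr)\,dr$ is a scalar multiple of $u(c)=\dot\gamma(c)$, while $M_s-\gamma(c)=\int_0^s\tfrac12\bigl(u(c+r)-u(c-r)\bigr)\,dr$ is a scalar multiple of $\dot u(c)=a(c)$, which is orthogonal to $\dot\gamma(c)$; so both conditions above hold and (a) follows. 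The one genuinely geometric point — and the main obstacle — is the middle reduction: recognizing that the flatness of the a priori curved hypersurface $\{T=c\}$ collapses, sphere by sphere, to the elementary fact that a linear functional constant on a round sphere must vanish, which pins down $w_s\parallel\dot\gamma(c)$. After that it is bookkeeping: a third-order Taylor expansion in one direction, integration of a constant-coefficient linear ODE in the other, and a two-line differentiation of $\dot\gamma\cdot\dot\gamma=\mathrm{const}$ to identify ``$\dddot\gamma\parallel\dot\gamma$'' with ``Fermi-transported $4$-acceleration'' for a standard clock.
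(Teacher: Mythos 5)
Your proof is correct. There is, however, no in-paper argument to compare it against: the paper states Proposition \ref{prop:coin} as a quoted result of Dombrowski, Kuhlmann and Proff \cite{DombrowskiKuhlmannProff1984}, remarking only that the direction (b) $\Rightarrow$ (a) is ``easy to check'' for the two cases (straight line and hyperbola) and that (a) $\Rightarrow$ (b) is the non-trivial content. Your write-up therefore supplies a complete, self-contained proof of both directions where the paper defers to a reference. The key device --- adding and subtracting the two null conditions to exhibit $\{T=c\}$ as the family of round $2$-spheres $\Sigma_{c,s}$ centred at $M_s$ in $M_s+w_s^{\perp}$ --- is exactly what makes the hard direction elementary: flatness of the level set collapses to the fact that a linear functional constant on a centred round sphere vanishes, which forces $w_s\parallel\dot\gamma(c)$, and the expansion $w_s=s\,\dot\gamma(c)+\tfrac{s^3}{6}\dddot\gamma(c)+o(s^3)$ then yields $\dddot\gamma=(a\cdot a)\,\dot\gamma$, correctly identified with Fermi transport of $a$ after differentiating $\dot\gamma\cdot\dot\gamma=-1$. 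The converse via the even/odd splitting of solutions of $\ddot u=\alpha u$ is also sound and is tidier than the paper's case-by-case remark, since it needs no prior classification of the worldline as a line or a hyperbola. Two minor points worth making explicit: having shown $\{T=c\}\subset H_c$, add the one-line observation that a $3$-dimensional level hypersurface contained in the $3$-plane $H_c$ must be an open subset of it, which is the local sense in which ``is a hyperplane'' has to be read (radar time is only defined on a tubular neighbourhood of $\gamma$); and note that when $\alpha=a\cdot a=0$ the orthogonality $a\cdot\dot\gamma=0$ together with $a$ being spacelike or zero forces $a=0$, so the ODE argument covers the inertial case without a separate discussion.
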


A worldline with constant 4-acceleration in Minkowski spacetime is either a straight line
(``inertial observer'', for which the 4-acceleration is zero) or a hyperbola (``Rindler observer'',
for which the 4-acceleration is a non-zero Fermi-transported vector, see Figure \ref{fig:shadhor}). 
It is easy to check that, indeed, in both cases the radar time hypersurfaces with respect 
to proper time parametrization are hyperplanes. The non-trivial 
statement of Proposition \ref{prop:coin} is in the fact that these are the \emph{only} cases 
for which the radar time hypersurfaces are hyperplanes.

We end this section with a remark on the fact that the term ``radar coordinates'' has been
used in the literature also in another way. Instead of supplementing
the radar time $T$ and the radar distance $R$  with
two angular coordinates, one could choose a second clock $\tilde{\gamma}$ which defines a 
radar time $\tilde{T}$ and a radar distance $\tilde{R}$. If the two clocks are sufficiently
close, $(T,R,\tilde{T},\tilde{R})$ can be used as coordinates on some open subset which
is not a tubular neighborhood of either clock.  Of course, one can replace $(T,R)$ by 
$(t_1,t_2)$ according to (\ref{eq:defT}) and (\ref{eq:defR}), and analogously $(\tilde{T},
\tilde{R})$ by $(\tilde{t}{}_1,\tilde{t}{}_2)$. In the coordinates
$(t_1,t_2,\tilde{t}{}_1,\tilde{t}{}_2)$, which are used e.g. by Ehlers, Pirani and Schild
\cite{EhlersPiraniSchild1972}, the coordinate hypersurfaces are light cones. Thus, the
construction makes use of the fact that four light cones generically intersect in a 
point. (Two light cones generically intersect in a 2-dimensional manifold, where 
``generically'' means that we have to exclude points where one of the light cones 
fails to be a submanifold and points where the two light cones are tangent. Similarly, 
three light cones generically intersect in a 1-dimensional manifold.) In this sense the
radar coordinates of Ehlers, Pirani and Schild are similar to the GPS type coordinates
of Blagojevi{\'c}, Garecki, Hehl and Obukhov \cite{BlagojevicGareckiHehlObukhov2002}.
The only difference is that the latter characterize each point as intersection
of four future light cones that issue from four given worldlines (``GPS satellites''),
whereas the former characterize each point as intersection of two future and two past
light cones that issue from two given worldlines.

\section{Synchronization of clocks}
\label{sec:synchronization}

Let $\gamma$ be a clock in an arbitrary general-relativistic spacetime, and consider  a
second clock $\tilde{\gamma}$. If $\tilde{\gamma}$ is sufficiently close to $\gamma$, the 
radar method, carried through with respect to the clock $\gamma$, assigns a unique time 
$T(\tilde{t})$ and a unique distance $R(\tilde{t})$ to each event $\tilde{\gamma} (\tilde{t})$. 
We say that $\tilde{\gamma}$ is \emph{synchroneous} to $\gamma$ if $T(\tilde{t})=\tilde{t}$ for
all $\tilde{t}$ in the considered time interval. (Instead of synchroneous one may say 
\emph{Einstein synchroneous} or \emph{radar synchroneous} to be more specific.) Clearly,
for every worldline sufficiently close to $\gamma$ there is a unique parametrization that
is synchroneous to $\gamma$. Selecting this particular parametrization is called 
\emph{synchronization} with $\gamma$. Note that the relation of being synchroneous 
is not symmetric: $\tilde{\gamma}$ may be synchroneous to $\gamma$ without $\gamma$ being 
synchroneous to $\tilde{\gamma}$. As an example, we may choose two affinely parametrized
straight timelike lines $\gamma$ and $\tilde{\gamma}$ in Minkowski spacetime that are not
parallel. If we arrange the parameters such that $\tilde{\gamma}$ is synchroneous to
$\gamma$, the converse is not true. Also, the relation of being synchroneous 
is not transitive: If $\tilde {\gamma}$ is synchroneous to $\gamma$ and $\hat{\gamma}$ is 
synchroneous to $\tilde{\gamma}$, it is not guaranteed that $\hat{\gamma}$ is synchroneous 
to $\gamma$. This non-transitivity is best illustrated with the \emph{Sagnac effect}: 
Consider a family of clocks along the rim of a rotating circular platform in Minkowski 
spacetime. Starting with any one of these clocks, synchronize each clock  
with its neighbor on the right. Then there is a deficit time interval after 
completing the full circle.

The following proposition characterizes the special situation that two clocks are mutually
synchroneous. 

\begin{proposition}\label{prop:symmetry}
Let $\gamma : \mathbb{R} \longrightarrow M$ and $\tilde{\gamma} : \mathbb{R} \longrightarrow M$ 
be two clocks, in an arbitrary spacetime, for which
the parameter extends from $- \infty$ to $\infty$. Assume that the worldlines of the two clocks
have no intersection but are sufficiently close to each other such that the radar method can be 
carried through in both directions. If $\tilde{\gamma}$ is synchroneous to $\gamma$
and $\gamma$ is synchroneous to $\tilde{\gamma}$, then the radar distance $R$ of $\tilde{\gamma}$ 
with respect $\gamma$ is a constant $R_0$, and the radar distance $\tilde{R}$ of $\gamma$
with respect to $\tilde{\gamma}$ is the same constant $R_0$.
\end{proposition}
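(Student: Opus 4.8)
For each parameter value $\tilde t$, the radar procedure from $\gamma$ selects two reals $t_1(\tilde t) \le t_2(\tilde t)$ such that there is a past-pointing light ray from $\gamma(t_2(\tilde t))$ to $\tilde\gamma(\tilde t)$ and a future-pointing light ray from $\gamma(t_1(\tilde t))$ to $\tilde\gamma(\tilde t)$, and by definition $T(\tilde t) = \tfrac12(t_2(\tilde t) + t_1(\tilde t))$, $R(\tilde t) = \tfrac12(t_2(\tilde t) - t_1(\tilde t))$. Symmetrically, the procedure from $\tilde\gamma$ selects $\tilde t_1(t) \le \tilde t_2(t)$ with $\tilde T(t) = \tfrac12(\tilde t_2(t)+\tilde t_1(t))$, $\tilde R(t) = \tfrac12(\tilde t_2(t) - \tilde t_1(t))$. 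The plan is to extract a functional equation linking these two families of maps from the fact that "$\gamma(a)$ and $\tilde\gamma(b)$ are joined by a light ray" is a symmetric relation on the two worldlines.

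**The key observation.** The crucial point is a bookkeeping identity: if $\tilde\gamma(\tilde t)$ receives a future-pointing light ray from $\gamma(t_1)$, then from the viewpoint of the clock $\gamma$ the *emission* event $\gamma(t_1)$ is hit by a *past*-pointing light ray coming from $\tilde\gamma(\tilde t)$ — equivalently, $\tilde t = \tilde t_2(t_1)$, i.e. $\tilde t$ is the *later* of the two parameter values that $\tilde\gamma$'s radar procedure attaches to the event $\gamma(t_1)$. Likewise, the past-pointing ray from $\gamma(t_2)$ to $\tilde\gamma(\tilde t)$ says that $\tilde\gamma(\tilde t)$ emits a future-pointing ray reaching $\gamma(t_2)$, so $\tilde t = \tilde t_1(t_2)$. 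Writing this out: for every $\tilde t$,
\[
  \tilde t = \tilde t_2\big(t_1(\tilde t)\big) \qquad\text{and}\qquad \tilde t = \tilde t_1\big(t_2(\tilde t)\big).
\]
Now impose the two synchronization hypotheses. "$\tilde\gamma$ synchroneous to $\gamma$" means $T(\tilde t)=\tilde t$, i.e. $t_2(\tilde t) = \tilde t + R(\tilde t)$ and $t_1(\tilde t)=\tilde t - R(\tilde t)$. "$\gamma$ synchroneous to $\tilde\gamma$" means $\tilde T(t)=t$, i.e. $\tilde t_2(t) = t + \tilde R(t)$ and $\tilde t_1(t) = t - \tilde R(t)$. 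Substituting these into the two displayed identities and abbreviating $r(\tilde t) := R(\tilde t)$ and $\tilde r := \tilde R$, the first identity becomes $\tilde t = \big(\tilde t - r(\tilde t)\big) + \tilde r\big(\tilde t - r(\tilde t)\big)$, that is
\[
  \tilde r\big(\tilde t - r(\tilde t)\big) = r(\tilde t),
\]
and the second becomes $\tilde t = \big(\tilde t + r(\tilde t)\big) - \tilde r\big(\tilde t + r(\tilde t)\big)$, that is
\[
  \tilde r\big(\tilde t + r(\tilde t)\big) = r(\tilde t).
\]
Thus $\tilde r$ takes the same value $r(\tilde t)$ at the two points $\tilde t \pm r(\tilde t)$, which are precisely the emission/reception parameters $t_1(\tilde t), t_2(\tilde t)$ on $\gamma$.

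**Closing the argument.** From $\tilde r(t_1(\tilde t)) = \tilde r(t_2(\tilde t)) = R(\tilde t)$ one reads off, by feeding a value $t$ on $\gamma$ into $\tilde\gamma$'s procedure and back, that $R$ is constant along $\gamma$ and $\tilde R$ along $\tilde\gamma$, with a common value. Concretely: pick any $t$; applying the second displayed identity above with roles of $\gamma,\tilde\gamma$ swapped gives $r(t + \tilde r(t)) = \tilde r(t)$ and $r(t - \tilde r(t)) = \tilde r(t)$, i.e. $R$ agrees at the two points $\tilde t_1(t), \tilde t_2(t)$. Iterating — from $\tilde t$ move to $t_2(\tilde t) = \tilde t + R(\tilde t)$, then back to $\tilde t_2(t_2(\tilde t)) = t_2(\tilde t) + \tilde R(t_2(\tilde t)) = t_2(\tilde t) + R(\tilde t) = \tilde t + 2R(\tilde t)$, etc. — produces, starting from any $\tilde t_0$, an arithmetic progression $\tilde t_0 + 2nR(\tilde t_0)$, $n \in \mathbb Z$, along which $R$ is identically $R(\tilde t_0)$; interleaving the $\gamma$-side steps does the same for $\tilde R$. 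Because the parameters run over all of $\mathbb R$ (this is where the hypothesis that both parameters extend from $-\infty$ to $\infty$ is used — it guarantees the bounce construction never runs off the end of either worldline, so the recursion is globally defined) and $R$ is continuous with $R > 0$ on the relevant set (the worldlines do not meet), a short argument shows the locally-constant-along-a-lattice behavior forces $R$ to be globally constant: if $R$ took two different values $R(\tilde t_0) = R_0 \ne R_1 = R(\tilde t_1)$, the two incommensurable-or-not lattices of constancy would have to be disjoint, yet continuity and connectedness of $\mathbb R$ make that impossible. Hence $R \equiv R_0$ and, by the symmetric identities, $\tilde R \equiv R_0$ as well.

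**Main obstacle.** The functional-equation extraction is essentially forced, so the real work is the last step: upgrading "$R$ is constant along each $2R(\tilde t_0)$-lattice" to "$R$ is globally constant." The clean way is to differentiate. Assuming the radar functions are $C^1$ (which follows from smoothness of $\gamma,\tilde\gamma$ and of the exponential map, as in Proposition~\ref{prop:radneigh}), differentiate $\tilde r(\tilde t + r(\tilde t)) = r(\tilde t)$ and $\tilde r(\tilde t - r(\tilde t)) = r(\tilde t)$ with respect to $\tilde t$; subtracting the two relations gives $\tilde r'(t_2(\tilde t))(1 + r'(\tilde t)) = \tilde r'(t_1(\tilde t))(1 - r'(\tilde t))$ and adding a companion pair obtained from the swapped identities yields a linear system whose only solution compatible with $R,\tilde R > 0$ everywhere is $r' \equiv 0$, $\tilde r' \equiv 0$. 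That differential manipulation, plus checking the common value, is the one place where care is needed; everything else is substitution.
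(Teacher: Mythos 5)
Your derivation of the functional equations is correct and coincides with the paper's route: combining the two synchronization hypotheses with the reciprocity of light rays between the worldlines gives $\tilde{R}\big(\tilde{t}-R(\tilde{t})\big)=\tilde{R}\big(\tilde{t}+R(\tilde{t})\big)=R(\tilde{t})$ and hence $R(\tilde{t})=R\big(\tilde{t}\pm 2nR(\tilde{t})\big)$, which is exactly the content of (\ref{eq:period}) and (\ref{eq:induct}). The gap is in the last step, the one you yourself flag as ``the real work''. Your first argument --- that if $R$ took two values $R_0\neq R_1$ then ``the two lattices of constancy would have to be disjoint, yet continuity and connectedness of $\mathbb{R}$ make that impossible'' --- is not valid: a continuous positive function can perfectly well equal $R_0$ on the lattice $\tilde{t}_0+2R_0\mathbb{Z}$ and $R_1$ on a disjoint lattice $\tilde{t}_1+2R_1\mathbb{Z}$; continuity and connectedness yield no contradiction. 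What excludes non-constant solutions is that the functional equation holds at \emph{every} parameter value, and your argument does not exploit that. Your fallback via differentiation also does not close the gap as written: the relations $\tilde{R}'(t_2)\big(1+R'\big)=R'=\tilde{R}'(t_1)\big(1-R'\big)$, together with the swapped pair, are not a linear system at a single point whose only solution is zero --- they link derivative values along an infinite chain of distinct points (for instance $R'=1/2$, $\tilde{R}'(t_2)=1/3$, $\tilde{R}'(t_1)=1$ is consistent with the first pair at one site), and positivity of $R$ and $\tilde{R}$ is not what eliminates the nonzero solutions. To make that route rigorous you would need to establish the a priori bound $|R'|<1$ (equivalently, that $t_1$ and $t_2$ are increasing functions of $\tilde{t}$, a causality statement) and then iterate the map $c\mapsto c/(1\mp c)$ along the chain until it violates that bound.

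For comparison, the paper closes the argument by a different and complete device: assuming $R'(t_*)<0$, the intermediate value theorem produces a sequence $t_n\to t_*$ with $t_n-2(n+1)R(t_n)=t_*-2nR(t_*)$, so that the $n$-fold iterated functional equation (\ref{eq:induct}) forces $R(t_n)=R(t_*)$ and hence $R'(t_*)=0$, a contradiction; then $R'\ge 0$ everywhere, combined with $R(t)=R\big(t-2R(t)\big)$ and $R(t)>0$, forces $R'\equiv 0$. You should either adopt this argument or carry out the iteration version of your differential one; as it stands, the passage from lattice constancy to global constancy is missing.
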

\begin{proof}
The radar method carried through with $\gamma$ assigns to each event $\tilde{\gamma} ( \tilde{t} )$
a time $T(\tilde{t})$ and a distance $R(\tilde{t})$. Analogously, the radar method carried
through with respect to $\tilde{\gamma}$ assigns to each event $\gamma (t)$ a time $\tilde{T}
(t)$ and a distance $\tilde{R} (t)$. This implies the following identities, see Figure
\ref{fig:symmetry}. 
\begin{equation}\label{eq:synch}
\begin{split}
	t = T \big( \tilde{T}(t)- \tilde{R} (t) \big) +
      R \big( \tilde{T}(t)- \tilde{R} (t) \big) \; ,
\\
	\tilde{t} = \tilde{T} \big( T(\tilde{t})- R (\tilde{t}) \big) +
      \tilde{R} \big( T(\tilde{t})- R (\tilde{t}) \big) \; .
\end{split}
\end{equation}
If the clocks are mutually synchroneous, $T$ and $\tilde{T}$ are the identity maps, so 
(\ref{eq:synch}) simplifies to
\begin{equation}\label{eq:symmetry}
\begin{split}
	\tilde{R} (t) =
      R \big( t- \tilde{R} (t) \big) \; ,
\\
	R (\tilde{t}) =
      \tilde{R} \big( \tilde{t}- R (\tilde{t}) \big) \; .
\end{split}
\end{equation}
These equations hold for all $t$ and all $\tilde{t}$ in $\mathbb{R}$. By considering the
special case $t = \tilde{t} -R( \tilde{t})$ we find
\begin{equation}\label{eq:period}
	R(\tilde{t}) = R \big( \tilde{t}-2 R(\tilde{t}) \big) 
\end{equation} 
for all $\tilde{t}$ in $\mathbb{R}$. To ease notation, we drop the tilde in the following.
By induction, (\ref{eq:period}) yields
\begin{equation}\label{eq:induct}
	R(t) = R \big( t-2 n R(t) \big) \qquad \text{for all} \: n \in \mathbb{N} \, .
\end{equation} 
It is now our goal to prove that (\ref{eq:induct}) implies that $R$ is a constant. By
contradiction, assume there is a point where $R$ has negative derivative, $R'( t_*)<0$.
Then we must have
\begin{equation}\label{eq:limit}
	\frac{t_* \, - \, (t_*+ \varepsilon ) \, + \, 2 \, R(t_* + \varepsilon )}{
	2 \big( R(t_*)-R(t_*+ \varepsilon ) \big)} \, \longrightarrow \infty \qquad
	\text{for} \: \varepsilon \to +0 \; ,
\end{equation} 
because, by our assumption that the worldlines of the two clocks do not intersect,
$R(t_*)>0$.  Thus, there is an infinite sequence $t_n$ that converges towards
$t_*$ from above, such that
\begin{equation}\label{eq:sequence}
	n \: = \:
	\frac{t_* \, - \, t_n \, + \, 2 \, R(t_n )}{
	2 \big( R(t_*)-R(t_n ) \big)}  \qquad
	\text{for all sufficiently large} \: \; n \in \mathbb{N} \; .
\end{equation} 
As (\ref{eq:sequence}) can be rewritten as
\begin{equation}\label{eq:rational}
	t_n \, - \, 2 \, (n+1) \, R(t_n )  \; = \; 
	t_* \, - \, 2 \, n \, R(t_* ) \; , 
\end{equation} 
our earlier result (\ref{eq:induct}) yields $R(t_n)=R(t_*)$ for all members $t_n$
of our sequence, which obviously contradicts the assumption $R'(t_*)<0$. We have
thus proven that $R'(t) \ge 0$ for all $t$. But then we must have $R'(t)=0$ for all
$t$, because, again by (\ref{eq:induct}), to every $t$ there is a smaller parameter
value at which the function $R$ takes the same value. Hence, $R$ must be a constant,
$R(t)=R_0$ for all $t$. It is obvious from (\ref{eq:symmetry}) that then $\tilde{R}$
must take the same constant value. 
\end{proof}

\begin{figure}
    \psfrag{Q}{$T(\tilde{t})+R(\tilde{t})$} 
    \psfrag{P}{$\tilde{t}$} 
    \psfrag{S}{$T(\tilde{t})-R(\tilde{t})$} 
    \psfrag{E}{$\tilde{T}(t)+\tilde{R}(t)$} 
    \psfrag{F}{$t$} 
    \psfrag{G}{$\tilde{T}(t)-\tilde{R}(t)$} 
\centerline{\epsfig{figure=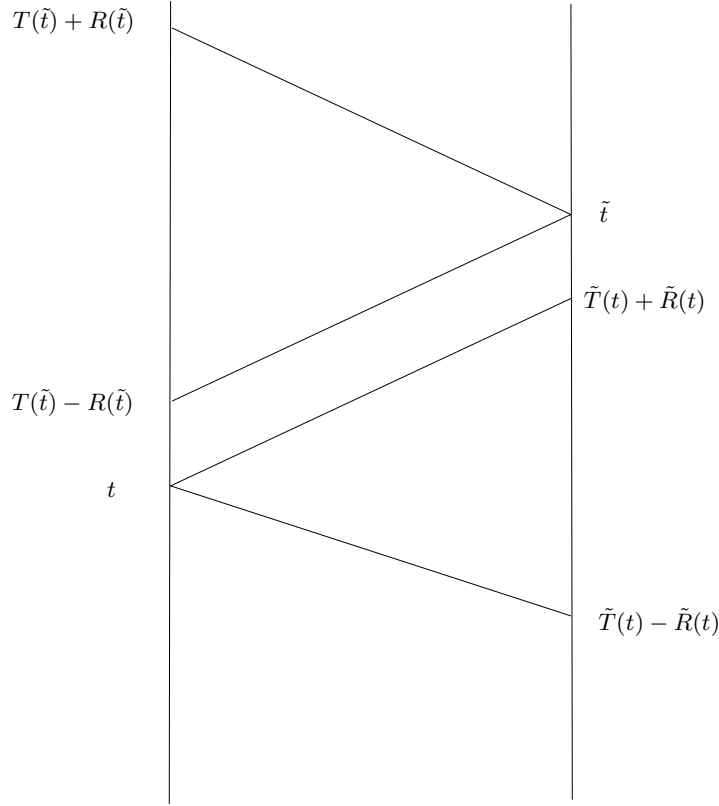, width=8cm}}
\caption{Illustration of the proof of Proposition \protect\ref{prop:symmetry}.}
\label{fig:symmetry}
\end{figure}

We illustrate this result with an example in Minkowski spacetime, using standard
coordinates $(x^0,x^1,x^2,x^3)$ such that the metric takes the form 
\begin{equation}\label{eq:Minkowski}
g=-(dx^0)^2+(dx^1)^2+(dx^2)^2+(dx^3)^2 \: .
\end{equation}
We consider the two clocks 
\begin{equation}\label{eq:clock1}
	\gamma (t)\, = \, ( \, t \, , \, 0 \, , \, 0 \, , \, 0 \, ) \: ,
\end{equation}
\begin{equation}\label{eq:clock2}
	\tilde{\gamma} (\tilde{t}) \, = \, \big( \, \sqrt{1-\omega^2R_0^2}\,  \tilde{t} \, ,
	\, R_0 \, \mathrm{cos} \, \omega \tilde{t} \, , \, R_0 \, 
	\mathrm{sin} \, \omega \tilde{t} \, , \, 0 \, \big) \: ,
\end{equation}
where $\omega$ and $R_0$ are constants such that $\omega^2 R_0^2 <1$.
In both cases the parameter is proper time, i.e., both clocks are standard clocks
with the usual choice of the time unit. The first clock is at rest at the origin of the 
coordinate system, the other clock moves with constant angular velocity $\omega$
on a circle with radius $R_0$ around the origin. An elementary exercise shows
that the radar method carried through with respect to $\gamma$ assigns to each
event $\tilde{\gamma} ( \tilde{t} )$ the time $T(\tilde{t})=  \sqrt{1-\omega^2R_0^2} \, \tilde{t}$
and the distance $R(\tilde{t})=R_0$. On the other hand, the radar method carried
through with respect to $\tilde{\gamma}$ assigns to each event $\gamma (t)$ the
time $\tilde{T} (t) =  t / \sqrt{1-\omega ^2 R_0^2}$ and the distance
$\tilde{R} (t)=  R_0 / \sqrt{1-\omega ^2 R_0^2}$. Thus, neither clock is synchroneous
to the other, and they assign to each other constant but different distances. Now let
us modify this example by changing the time unit for $\tilde{\gamma}$
according to the affine transformation $\tilde{t} \mapsto \hat{t} =
\sqrt{1-\omega ^2 R_0^2} \, \tilde{t}$. This transformation replaces 
$\tilde{\gamma}$ with a new clock $\hat{\gamma}$, 
\begin{equation}\label{eq:hatgamma}
	\hat{\gamma} (\hat{t})= \big( \, \hat{t} \, , \, R_0 \, 
	\mathrm{cos} \, \frac{\omega \, \hat{t}}{\sqrt{1-\omega ^2 R_0^2}} \, ,
	\, R_0 \, \mathrm{sin} \, \frac{\omega \, \hat{t}}{\sqrt{1-\omega ^2 R_0^2}} 
	\, , \, 0 \, \big) \:.
\end{equation}
Note that $\hat{\gamma}$ is still a standard clock, but not with the usual time
unit. We now find that the radar method carried through with respect to $\gamma$ 
assigns to each event $\hat{\gamma} ( \hat{t} )$ the time $T(\hat{t})=  \hat{t}$
and the distance $R(\tilde{t})=R_0$. On the other hand, the radar method carried
through with respect to $\hat{\gamma}$ assigns to each event $\gamma (t)$ the
time $\hat{T} (t) = t$ and the distance $\hat{R} (t)=  R_0$. This modified
example illustrates that Proposition \ref{prop:symmetry} may apply to situations
where there is no symmetry between the two clocks.

\section{Observer fields}
\label{sec:fields}

By an \emph{observer field} on a general-relativistic spacetime we mean a smooth vector field 
$V$ which is everywhere timelike and future-pointing. An observer field $V$ is called
a \emph{standard observer field} if $g(V,V)=-1$. According to our earlier terminology,
integral curves of observer fields are clocks, and integral curves of standard observer
fields are standard clocks with the usual choice of time unit. For the sake of brevity, 
we will refer to the integral curves of an observer field $V$ as to ``clocks in $V$''. 
Note that $V$ fixes the parametrization for each of its integral curves uniquely up to 
an additive constant, i.e., for each clock in $V$ there is still the freedom of ``choosing 
the zero point on the clock's dial''.

In this section we consider the following four properties of an observer field $V$, and for
each of them we give neccesary and sufficient conditions on $V$ under which it is satisfied.

\begin{description}
\item[{\bf Property A}:]
For each clock $\gamma$ in $V$, any other clock in $V$ that is sufficiently close to $\gamma$
such that the radar method can be carried through is synchroneous with $\gamma$, provided 
that the additive constant has been chosen appropriately. 
\item[{\bf Property B}:]
For each clock $\gamma$ in $V$, any other clock in $V$ that is sufficiently close to $\gamma$
such that the radar method can be carried through has temporally constant radar distance 
from $\gamma$. 
\item[{\bf Property C}:]
For any three clocks $\gamma _1$, $\gamma _2$, $\gamma _3$ in $V$ which are sufficiently
close to each other the following is true.
If one light ray from $\gamma _1$ to $\gamma _3$ intersects the worldline of $\gamma _2$,
then all light rays from $\gamma _1$ to $\gamma _3$ intersect the worldline of $\gamma _2$. 
\item[{\bf Property D}:]
For any two clocks $\gamma _1$ and $\gamma _2$ in $V$ that are sufficiently close to each
other, the light rays from $\gamma _1$ to $\gamma _2$ and the light rays from $\gamma _2$ to 
$\gamma _1$ span the same 2-surface. 
\end{description}

All four properties are obviously closely related to the radar method, and we will 
discuss them one by one. In the following we have to assume that the reader is 
familiar with the standard text-book decomposition of the covariant derivative
of an observer field into acceleration, rotation, shear and expansion, and with the
related physical interpretation.

We begin with Property A. We emphasize that in the formulation of this property we restricted
to clocks that are sufficiently close to each other such that the radar method can be carried
through, but \emph{not} to clocks that are infinitesimally close. The synchronizability 
condition for infinitesimally close clocks is a standard text-book matter, see e. g.
Sachs and Wu \cite{SachsWu1977}, Sect. 2.3 and 5.3. One finds that this condition is 
satisfied, for an appropriately rescaled observer field $e^f V$, if and only if $V$ is 
irrotational, i.e. locally hypersurface-orthogonal. The rescaling means that the clocks
of the observers have to be changed appropriately. The synchronization condition for clocks
that are not infinitesimally close to each other is less known. It is given in the following
Proposition.

\begin{proposition}\label{prop:synch}
\begin{itemize}
\item[\emph{(i)}]
A standard observer field $V$ satisfies Property A if and only if $V$ is an irrotational
Killing vector field.
\item[\emph{(ii)}]
An arbitrary $($not necessarily standard$\, )$ observer field $V$ satisfies Property A if and only
if $V$ is an irrotational conformal Killing vector field.
\end{itemize}
\end{proposition}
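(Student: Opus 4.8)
The plan is to reduce assertion (ii) to assertion (i) by a conformal rescaling, and then to prove (i) by extracting irrotationality first and the Killing condition afterwards. The reduction uses three facts. First, the radar time and radar distance of an event with respect to a clock $\gamma$ depend only on the conformal class $[g]$ and on the parametrized curve $\gamma$, since light rays are null geodesics, the null geodesics of $g$ and $\Omega^2 g$ coincide as point sets, and $\gamma$ together with its parameter is untouched by a conformal rescaling; hence Property A is a conformally invariant condition on the pair $([g],V)$. Second, the conformal Killing property ($\mathcal{L}_V g = 2\phi g$) and irrotationality ($V^\flat\wedge dV^\flat = 0$) are likewise conformally invariant, because $\mathcal{L}_V(\Omega^2 g)=2(\phi+V\ln\Omega)\,\Omega^2 g$ and $(\Omega^2 V^\flat)\wedge d(\Omega^2 V^\flat)=\Omega^4\,V^\flat\wedge dV^\flat$. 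Third, for a \emph{standard} observer field a conformal Killing field is automatically Killing (contract $\mathcal{L}_V g=2\phi g$ twice with $V$ and use $g(V,V)=-1$ to get $\phi=0$). So, given an arbitrary observer field $V$, replacing $g$ by $\hat g:=-g/g(V,V)$ makes $V$ standard, turns ``irrotational conformal Killing'' into ``irrotational Killing'', and leaves Property A unchanged. Since a standard irrotational Killing field is the same thing as a covariantly constant timelike vector field, it suffices to prove: \emph{a standard observer field $V$ satisfies Property A if and only if $\nabla V=0$}.

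For ``$\Leftarrow$'', $\nabla V=0$ makes $V^\flat$ closed, so locally $V^\flat=-dt$, and together with $g(V,V)=-1$ and $\mathcal{L}_V g=0$ one gets, in coordinates adapted to $V=\partial_t$, the product metric $g=-dt^2+h_{ij}(x^1,x^2,x^3)\,dx^i dx^j$. For two integral curves $x^i=0$ and $x^i=\tilde x^i$ in a common radar neighborhood, a connecting light ray projects to a geodesic of $(h_{ij})$ of some length $\ell$, and invariance of $g$ under $t\mapsto t+\mathrm{const}$ and under $t\mapsto -t$ forces its coordinate-time lapse to be $+\ell$ towards the future and $-\ell$ towards the past; hence $t_2=\tilde t+\ell$, $t_1=\tilde t-\ell$, so $T(\tilde t)=\tilde t$ and $R(\tilde t)=\ell$. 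Every nearby clock in $V$ is thus already synchronous to $\gamma$ with vanishing additive constant, so Property A holds.

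For ``$\Rightarrow$'' I would argue in two steps. Step 1: $V$ is irrotational. The vorticity of $V$ is exactly the local obstruction to a consistent Einstein synchronization of the integral curves of $V$ (cf.\ Sachs and Wu \cite{SachsWu1977}); were it nonzero, a small ``Sagnac loop'' of integral curves would accumulate a nonzero deficit, so the additive constants required by Property A could not be chosen consistently. Step 2: with the vorticity gone, pick coordinates in which $g=-N^2(t,x)\,dt^2+h_{ij}(t,x)\,dx^i dx^j$ and $V=N^{-1}\partial_t$. Expanding the radar triangle between $\gamma$ (the curve $x=0$) and a neighboring integral curve in powers of the transverse separation, the first-order term of the condition that the radar time, read in the neighbor's $V$-parameter, has slope exactly $1$ forces $\partial_i N=0$ along $\gamma$; as $\gamma$ ranges over all integral curves this gives $N=N(t)$, and reparametrizing $t$ we may assume $N\equiv 1$, i.e.\ $g=-dt^2+h_{ij}(t,x)\,dx^i dx^j$ with $V=\partial_t$. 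Now Property A entails mutual synchronization, so by Proposition~\ref{prop:symmetry} the radar distance that $\gamma$ assigns to the event at parameter $t$ on the neighboring curve $x=\tilde x$ is constant in $t$; to leading order in the separation this radar distance equals $\sqrt{h_{ij}(t,0)\,\tilde x^i\tilde x^j}$, so its constancy forces $\partial_t h_{ij}=0$. Hence $g=-dt^2+h_{ij}(x)\,dx^i dx^j$ and $\nabla V=0$.

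The step I expect to be the main obstacle is Step 2: setting up the expansion of the echo times $t_1(\tilde t),t_2(\tilde t)$ for a neighboring integral curve in these adapted coordinates and carrying the bookkeeping carefully enough to isolate, at the right orders, the transverse gradient of the lapse and then the time-derivative of the spatial metric, and to see that the slope-$1$ and constant-distance conditions make each of them vanish. Everything else is either standard — the vorticity/synchronization dictionary of Step 1, and the infinitesimal synchronizability criterion quoted in the text — or purely formal, namely the conformal-invariance reductions of the first paragraph and the product-coordinate computation of the second. Proposition~\ref{prop:symmetry} is itself the convenient device that converts Property A into the constancy of radar distances and hence into Property B, which is what the kinematical computation in Step 2 actually consumes.
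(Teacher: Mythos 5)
Your conformal reduction of (ii) to (i) is exactly the paper's own device, and the easy half of (i) is fine: a unit irrotational Killing field is indeed covariantly constant (for a Killing field $\nabla_aV_b$ is antisymmetric; normalization kills the acceleration and irrotationality kills the remaining spatial antisymmetric part), and your product-metric computation showing that nearby integral curves of $\partial_t$ in $g=-dt^2+h_{ij}(x)\,dx^i dx^j$ are mutually synchroneous is correct.

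The gap is the ``only if'' direction, which is precisely the part the paper does not prove itself but imports from Kuang and Liang \cite{KuangLiang1993}. Your Step 1 does not work as stated: Property A is a \emph{pairwise} condition in which the additive constant may be chosen anew for each pair of clocks, so it imposes no consistency requirement around a closed chain of observers, and a Sagnac deficit is therefore not an obstruction to it. Worse, the genuinely infinitesimal content of the pairwise condition is not irrotationality at all. Writing the connecting vector $\xi$ to a neighboring integral curve with $\mathcal{L}_V\xi=0$ and using that the hypersurfaces $T=\mathrm{const}$ meet $\gamma$ orthogonally, the requirement that $T(\tilde t)-\tilde t$ be constant linearizes to $\tfrac{d}{dt}\,g(\xi,V)=g(\xi,\nabla_VV)=0$ for all $\xi$, i.e.\ it forces $V$ to be \emph{geodesic}; the vorticity enters only at higher order in the separation, which is exactly why the finite-separation statement is the hard part of the theorem. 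Step 2 is then explicitly left as an unexecuted expansion (your ``main obstacle''), and it additionally leans on Proposition \ref{prop:symmetry}, whose proof requires the clocks to be parametrized by all of $\mathbb{R}$ (the induction $R(t)=R\bigl(t-2nR(t)\bigr)$ runs arbitrarily far into the past) --- a hypothesis not available in the local setting of Property A. So the substantive implication of the proposition remains unproven in your proposal; the paper disposes of it by citing the Kuang--Liang theorem and adding only the observation that labeling the synchronization hypersurfaces by proper time forces the flow of $V$ itself, not merely of a field proportional to it, to permute those hypersurfaces.
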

\begin{proof}
The hard part of the proof is in a paper by Kuang and Liang \cite{KuangLiang1993} who proved
the following. If $V$ is a standard observer field, any point admits a neighborhood that can
be sliced into hypersurfaces that are synchronization hypersurfaces for all clocks in $V$
if and only if $V$ is proportional to an irrotational Killing vector field. In this case, the
flow of the Killing vector field maps synchronization hypersurfaces onto synchronization
hypersurfaces. Clearly, Property A requires in addition that the hypersurfaces can be labeled
such that along each integral curve of $V$ the labeling coincides with proper time. Thus, the
flow of $V$ itself must map synchronization hypersurfaces onto synchronization hypersurfaces.
This completes
the proof of Proposition \ref{prop:synch} (i). Now let $V$ be an arbitrary observer field
on the spacetime $(M,g)$.
Then it is a standard observer field on the conformally rescaled spacetime $(M, -g(V,V)^{-1} g)$.
Clearly, as a conformal factor does not affect the paths of lightlike geodesics, $V$ satisfies
Property A on the original spacetime if and only if it satisfies Property A on the conformally
rescaled spacetime. By Proposition \ref{prop:synch} (i), the latter is true if and only if
$V$ is a normalized irrotational Killing vector field of the metric $-g(V,V)^{-1} g$ and, thus, 
if and only if $V$ is an irrotational conformal Killing vector field of the original metric $g$.
This completes the proof of Proposition \ref{prop:synch} (ii).
\end{proof}

A spacetime that admits an
irrotational Killing vector field normalized to $-1$ is called \emph{ultrastatic}, and a 
spacetime that admits an irrotational conformal Killing vector field is called \emph{conformally
static}. Hence, we can summarize, that ultrastaticity is necessary and sufficient for the 
existence of a standard observer field that satisfies Property A, and conformal staticity is necessary
and sufficient for the existence of a (not necessarily standard) observer field that satisfies
Property A. A simple and instructive example is an expanding Robertson-Walker spacetime. Such 
a spacetime admits a timelike conformal Killing vector field $W$ orthogonal to hypersurfaces
such that $g(W,W)$ is non-constant along the integral curves of $W$. The flow lines of $W$
are often refered to as the ``Hubble flow''. By Proposition \ref{prop:synch} (ii), the observer 
field $W$ satisfies Property A, i.e., if we use on the Hubble flow lines a parametrization adapted
to $W$ (often called ``conformal time''), then the clocks are synchroneous. However, the 
standard observer field $V$ that results by normalizing $W$ does not satisfy Property A, i.e.,
if we use on the Hubble flow lines the parametrization by proper time, the clocks are not 
synchroneous (unless they are infinitesimally close to each other). This example demonstrates 
that it is sometimes mathematically convenient to use non-standard observer fields.

We now turn to Property B which may be viewed as a rigidity condition. Again, 
there is a well-known text-book result on the situation where only clocks that 
are infinitesimally close are considered: For a standard observer field, 
any two clocks that are infinitesimally close to each
other have temporally constant radar distance if and only if $V$ has vanishing
shear and vanishing expansion. This is known as the \emph{Born rigidity} condition, 
refering to a classical paper by Born \cite{Born1909} who introduced this rigidity 
notion in \emph{special relativity}. The differential equations for Born rigid 
observer fields in \emph{general relativity} where first written by Salzmann and 
Taub \cite{SalzmannTaub1954}. They have non-trivial integrability conditions, 
i.e., Born rigid observer fields do not exist on arbitrary spacetimes. The 
following important result is known as the \emph{generalized Herglotz-Noether 
theorem}: If $V$ is a Born-rigid, not hypersurface orthogonal standard observer 
field on a spacetime with constant curvature, then $V$ is proportional 
to a Killing vector field. This was proven by Herglotz \cite{Herglotz1910} and 
Noether \cite{Noether1910} for the case of vanishing curvature (Minkowski spacetime) 
and generalized by Williams \cite{Williams1968} to the case of positive or negative 
curvature (deSitter or anti-deSitter spacetime). As in the case of the synchronization
condition, the rigidity condition for clocks that are not infinitesimally close to
each other is less well known. It is given in the following proposition.

\begin{proposition}\label{prop:rigid}
\begin{itemize}
\item[\emph{(i)}]
A standard observer field $V$ satisfies Property B if and only if $V$ is proportional
to a Killing vector field.
\item[\emph{(ii)}]
An arbitrary $($not necessarily standard$\, )$ observer field $V$ satisfies Property B if and 
only if $V = e^f W$, where $W$ is a conformal Killing vector field and $f$ is a scalar
function that is contant along each integral curve of $V$.
\end{itemize}
\end{proposition}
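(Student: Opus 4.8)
The plan is to prove part~(i) directly and then to deduce part~(ii) from it by a conformal rescaling, exactly as in the proof of Proposition~\ref{prop:synch}. In each part the ``if'' implication is the easy one and follows from a symmetry flow, while the ``only if'' implication of part~(i) is where the substantive work lies.

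\emph{The ``if'' direction of~(i).} Suppose $V=K/\sqrt{-g(K,K)}$ for a timelike Killing vector field $K$. The integral curves of $V$ are, as point sets, the integral curves of $K$; along each of them the flow parameter $s$ of $K$ satisfies the standard-clock condition~(\ref{eq:standard}), since $g(K,K)$ is constant along the orbits of a Killing field, so $s$ is related to proper time by an affine transformation whose slope is a constant of the orbit. The Killing flow $\phi_u$ is a flow of isometries, hence carries lightlike geodesics to lightlike geodesics and carries $\gamma_K(s)\mapsto\gamma_K(s+u)$ in the $K$-parametrisation. Applying $\phi_u$ to the radar configuration consisting of a past-pointing light ray from $\tilde\gamma_K(\tilde s)$ to $\gamma_K(s_1)$ and a future-pointing light ray from $\tilde\gamma_K(\tilde s)$ to $\gamma_K(s_2)$ gives the corresponding configuration at $\tilde\gamma_K(\tilde s+u)$ with $\gamma_K(s_1+u)$ and $\gamma_K(s_2+u)$, so the radar distance $\tfrac12(s_2-s_1)$ assigned to $\tilde\gamma_K(\tilde s)$ does not depend on $\tilde s$. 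Re-expressing $\gamma$ in proper time multiplies this radar distance by the (constant) slope of $\gamma$, and re-expressing $\tilde\gamma$ in proper time merely relabels its events, so the radar distance remains constant and Property~B holds.

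\emph{The ``only if'' direction of~(i).} Assume $V$ is a standard observer field satisfying Property~B. Applied to clocks infinitesimally close to a given one, Property~B reduces to the Born-rigidity condition, so $V$ is shear-free and expansion-free. In comoving coordinates $(t,x^1,x^2,x^3)$ with $V=\partial_t$, hence $g_{00}=-1$, this means that the rest-space metric $h_{ij}=g_{ij}+g_{0i}g_{0j}$ is independent of $t$, so $g=-(dt-\beta_i(t,x)\,dx^i)^2+h_{ij}(x)\,dx^i\,dx^j$ for a spatial one-form $\beta$. A direct computation shows that $V=\partial_t$ is proportional to a Killing vector field precisely when $\partial_t\beta_i=-\partial_i\psi$ for some $\psi=\psi(x)$; in that case the new time coordinate $\hat t=e^{\psi(x)}t$ makes all metric components time-independent, so $\partial_{\hat t}$ is Killing and $V=e^{\psi(x)}\partial_{\hat t}$. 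Thus what has to be extracted from Property~B, beyond Born rigidity, is that $\beta$ is affine in $t$ with its linear part an exact one-form on the rest space. To this end I would use the radar round-trip formula: solving $g=0$ gives $dt=\beta_i\dot x^i\,d\sigma\pm\sqrt{h_{ij}\dot x^i\dot x^j}\,d\sigma$ along a lightlike curve, so the round-trip time between a reference worldline $x=x_0$ and a nearby comoving worldline $x=x_q$ equals $\int_{\mathrm{out}}(\sqrt h+\beta)+\int_{\mathrm{back}}(\sqrt h+\beta)$ over the spatial projections of the two rays. Expanding this in the separation $x_q-x_0$ and requiring, as Property~B demands for \emph{every} pair of comoving worldlines, that it not depend on the reflection time forces the stated conditions on $\beta$ order by order. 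The main obstacle is making this expansion rigorous: the ray paths themselves depend on $\beta$, hence on $t$, so their variation must be controlled, which is where Fermat's principle for the (eventually stationary) metric enters; I expect this to be the step where an external rigidity result, analogous to the theorem of Kuang and Liang~\cite{KuangLiang1993} used in the proof of Proposition~\ref{prop:synch}, is invoked.

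\emph{Part~(ii).} An arbitrary observer field $V$ on $(M,g)$ is a standard observer field on the conformally rescaled spacetime $(M,g')$ with $g'=-g(V,V)^{-1}g$. The radar time and radar distance of an event with respect to a clock depend only on the point sets traced out by the connecting lightlike geodesics and on the parametrisation of the reference worldline as an integral curve of $V$, and neither is affected by a conformal rescaling; hence $V$ satisfies Property~B on $(M,g)$ iff it does on $(M,g')$. By part~(i) the latter holds iff $V=e^f K$ with $K$ a $g'$-Killing vector field. Then $g'(K,K)=-e^{-2f}$ is constant along the orbits of $K$, so $f$ is constant along the integral curves of $V$; and with $\Omega^2=-g(V,V)^{-1}$ one has $\mathcal L_K g=K(\Omega^{-2})\,g'+\Omega^{-2}\mathcal L_K g'=\bigl(\Omega^2K(\Omega^{-2})\bigr)g$, so $W:=K$ is a conformal Killing vector field of $g$. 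This is precisely the asserted form $V=e^f W$. The converse is obtained by rerunning the ``if'' argument of part~(i) with the conformal Killing flow of $W$ in place of the Killing flow, using again that lightlike geodesics are conformally invariant as point sets.
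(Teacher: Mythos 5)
Your overall architecture coincides with the paper's: part (ii) is deduced from part (i) by passing to the conformally rescaled metric $-g(V,V)^{-1}g$, using that light rays (as point sets) and hence the radar construction are conformally invariant, and your computation showing that a Killing field $K$ of the rescaled metric is a conformal Killing field of $g$ with $e^{-2f}=-g'(K,K)$ constant along the orbits is correct and just makes explicit what the paper states in one line. Your Killing-flow argument for the ``if'' direction of (i) is also correct (the paper does not spell this out at all). The one place where your proposal stops short is the ``only if'' direction of (i): the comoving-coordinate reduction to Born rigidity plus the claim about the form of $\beta_i$ is only a sketch, and you yourself flag that the decisive step --- extracting from the finite-distance rigidity condition that $V$ is proportional to a Killing field --- must be imported from an external theorem. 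That is precisely what the paper does: it proves part (i) solely by citing M{\"u}ller zum Hagen \cite{Mueller1972}, whose result is exactly the finite-distance rigidity characterization you are missing (not an analogue of Kuang--Liang \cite{KuangLiang1993}, which handles the synchronization Property A). So you have correctly located the hard step and correctly identified everything around it; to match the paper you need only replace your heuristic expansion of the round-trip time by the citation of \cite{Mueller1972}, since carrying that expansion out rigorously would amount to reproving that theorem.
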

\begin{proof}
For the proof of Proposition \ref{prop:rigid} (i) we refer to M{\"u}ller zum Hagen 
\cite{Mueller1972}. To prove Proposition \ref{prop:rigid} (ii), let $V$ be an 
arbitrary observer field. As in the proof of
Proposition \ref{prop:synch} (ii), we make use of the fact that $V$ satisfies 
Property B with respect to the metric $g$ if and only if $V$ satisfies 
Property B with respect to the conformally rescaled metric $-g(V,V)^{-1} g$.
The latter is true, by Proposition \ref{prop:rigid} (i), if and only if $V = e^f W$
where $W$ is a Killing vector field of the metric $-g(V,V)^{-1} g$. The latter
condition is true if $W$ is a conformal Killing vector field of the original metric
$g$ and $-g(V,V)^{-1} g(W,W) = e ^{-2f}$ is constant along each
integral curve of $V$. This completes the proof. 
\end{proof}
A spacetime that admits a timelike Killing vector field
is called \emph{stationary} and a spacetime that admits a timelike conformal
vector field is called \emph{conformally stationary}. Hence, we can summarize 
that stationarity is necessary and sufficient for the existence of a 
standard observer field with Property B, and that conformal stationarity is necessary and 
sufficient for the existence of a (not necessarily standard) observer field with 
Property B. As an example, we may again consider the Hubble flow in an expanding
Robertson-Walker spacetime. As with Property A, Property B is satisfied if we use 
conformal time but not if we use proper time.    

We now turn to Property C. This property can be rephrased in the following way.
If, from the position of one clock in $V$, two other clocks in $V$ are seen at
the same spot in the sky (i.e., one behind the other), then this will be true
for all times. In a more geometric wording, Property C requires that the light
rays issuing from any one integral curve of $V$ into the past together with 
the integral curves of $V$ are surface forming. In Hasse and 
Perlick \cite{HassePerlick1988}, observer fields with this property 
were called \emph{parallax free}, and the following 
proposition was proven.

\begin{proposition}\label{prop:parallax}
An observer field $V$ satisfies Property C 
if and only if $V$ is proportional to a conformal Killing vector field.
\end{proposition}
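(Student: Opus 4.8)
The plan is to reduce Property C to a statement about the conformal structure of the spacetime and then invoke the geometry of light cones. The first observation is that Property C is a statement about lightlike geodesics and the integral curves of $V$ only, and that the paths of lightlike geodesics depend only on the conformal class of $g$. Hence, exactly as in the proofs of Propositions \ref{prop:synch} and \ref{prop:rigid}, one may pass to the conformally rescaled metric $-g(V,V)^{-1}g$, with respect to which $V$ becomes a standard observer field; $V$ is proportional to a conformal Killing vector field of $g$ if and only if it is proportional to a Killing vector field of the rescaled metric, and actually one wants the slightly stronger reformulation that $V$ itself, suitably interpreted, is conformally Killing. So it suffices to treat the geometric content directly in terms of the local two-parameter congruences built from $V$ and its past light cones.

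Next I would give the geometric reformulation stated in the text: Property C holds if and only if, for each integral curve $c$ of $V$, the union of all past-pointing lightlike geodesics issuing from points of $c$, together with the integral curves of $V$, forms a (local) family of 2-surfaces — equivalently, the "optical image" of the congruence seen from any observer is time-independent. The key step is to translate "two clocks stay at the same spot in the sky" into a differential condition along $V$. Fix a clock $\gamma_1$ and an event $\gamma_2(s)$ on a neighboring clock; the past light ray from $\gamma_1$ to $\gamma_2(s)$ arrives at $\gamma_1$ with some spatial direction (a point on the celestial sphere of $\gamma_1$). Property C says this direction is the same for all $s$, i.e. the celestial position of a neighboring worldline is constant in time. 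Linearizing in the separation between the two clocks, this is a condition on how a connecting vector field along a lightlike geodesic, Lie-dragged by $V$, projects onto the screen space — precisely the kind of expression that the standard decomposition of $\nabla V$ into acceleration, rotation, shear and expansion controls. The computation should show that the screen-projected image is time-independent exactly when the transverse part of $\nabla V$ along every lightlike direction is pure trace plus the effect of a conformal rescaling — i.e. the shear of $V$ vanishes and, allowing the rescaling freedom, $V$ is conformally Killing. One recognizes here the characterization of conformal Killing fields by the vanishing of the shear of their null congruences.

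The hard part will be the null-geodesic connecting-vector calculation: showing that the "same spot in the sky for all times" condition, when integrated (not just linearized), forces $V$ to be globally conformal Killing rather than merely shear-free along the null cone of one point. Shear-freeness at a single event is weaker, so one must exploit that Property C is demanded for every triple of clocks and every pair of connecting light rays, which effectively samples all null directions at all points; combined with the integrability (surface-forming) condition for the 2-surfaces, this should upgrade pointwise shear-freeness of $V$'s null congruences to the conformal Killing equation $\mathcal{L}_V g = \lambda g$. For this step I would follow Hasse and Perlick \cite{HassePerlick1988} closely, using their notion of \emph{parallax-free} observer fields; the converse direction — that a conformal Killing field automatically has parallax-free null congruences — is the easy half, since the flow of a conformal Killing field maps light rays to light rays and integral curves to integral curves, so it carries the "past light cone of $\gamma_1$" picture rigidly along $V$, making the celestial image manifestly time-independent.
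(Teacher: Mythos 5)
Your proposal follows essentially the same route as the paper: the \emph{if} direction is obtained from the fact that the flow of a conformal Killing vector field maps light rays onto light rays (and integral curves of $V$ onto integral curves of $V$), so the past light cones of one clock are dragged rigidly along the congruence, and the \emph{only if} direction is deferred to Hasse and Perlick \cite{HassePerlick1988} --- which is exactly all the paper itself does. The one point to treat with care in your interpolated sketch is the suggestion that the condition amounts to vanishing shear of $V$: shear-freeness is necessary but not sufficient for $V$ to be proportional to a conformal Killing field, and the kinematical characterization in \cite{HassePerlick1988} supplements it with a closedness condition on a one-form built from the acceleration and the expansion of the congruence; since you hand that hard step to the reference anyway, this imprecision does not open a gap relative to the paper's own, equally citation-based, proof.
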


The ``if'' part follows from the well-known fact that the flow of a conformal 
Killing vector field maps light rays onto light rays. The proof of the ``only 
if'' part is more involved, see \cite{HassePerlick1988}. Clearly, Property C 
refers only to the motion of 
the clocks, but not to their ``ticking'' (i.e., not to the parametrization). 
Hence, it is irrelevant whether we consider standard observer fields or 
non-standard observer fields.

Finally we turn to Property D which is a way of saying that 
light rays from $\gamma _1$ to $\gamma _2$
take the same spatial paths as light rays from $\gamma _2$ to $\gamma _1$.
If this property is satisfied, there is a timelike 2-surface between $\gamma _1$
and $\gamma _2$ that is ruled by two families of lightlike geodesics. Note
that \emph{any} timelike 2-surface is ruled by two families of lightlike curves;
in general, however, these will not be geodesics. Foertsch, Hasse and Perlick
\cite{FoertschHassePerlick2003} have shown that a timelike 2-surface is
ruled by two families of lightlike geodesics if and only if its second
fundamental form is a multiple of its first fundamental form. In the
mathematical literature, such surfaces are called \emph{totally umbilic}. 
Some constrution methods and examples of timelike totally umbilic 2-surfaces
are discussed in Foertsch, Hasse and Perlick \cite{FoertschHassePerlick2003}. 
Note that in an arbitrary spacetime totally umbilic 2-surfaces need not
exist. This shows that Property D, which requires such a 2-surface between
any two sufficiently close integral curves of some observer field, is quite
restrictive. A criterion is given in the following proposition.

\begin{proposition}\label{prop:umbilic}
An observer field $V$ satisfies Property D 
if and only if $V$ is proportional to an irrotational conformal Killing vector field.
\end{proposition}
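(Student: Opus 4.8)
The plan is to prove the two implications separately, in each case reducing to a maximally symmetric model by a conformal rescaling of the metric, in the same spirit as the proofs of Propositions~\ref{prop:synch}(ii) and~\ref{prop:rigid}(ii). Two preliminary remarks make this possible. First, by the result of Foertsch, Hasse and Perlick quoted above, a timelike 2-surface is ruled by two families of lightlike geodesics if and only if it is totally umbilic; hence Property~D says precisely that between any two sufficiently close clocks $\gamma_1,\gamma_2$ in $V$ there is a totally umbilic timelike 2-surface $S_{12}$ containing both worldlines and ruled by the light rays running between them in the two time directions. For clocks close enough that the radar method works, the connecting light rays are unique, so $S_{12}$ is uniquely determined. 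Second, Property~D is conformally invariant, because light rays, the integral curves of $V$, and the 2-surface they span are all conformally invariant point sets; so is hypersurface-orthogonality, since $\Omega^2 W^\flat\wedge d(\Omega^2 W^\flat)=\Omega^4\,W^\flat\wedge dW^\flat$.

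For the ``if'' direction, let $V=\phi W$ with $W$ an irrotational conformal Killing vector field. Passing to $\hat g=-g(W,W)^{-1}g$, the field $W$ becomes a normalized irrotational Killing vector field, so $(M,\hat g)$ is locally ultrastatic: there are adapted coordinates in which $\hat g=-dt^2+h_{ij}(x)\,dx^i dx^j$ with $W=\partial_t$, and the clocks in $V$ are the static worldlines $x=\mathrm{const}$. A short computation shows that null geodesics of $\hat g$ project to geodesics of the spatial metric $h$. Hence, for two sufficiently close static clocks at spatial points $A$ and $B$, the unique light rays between them in either time direction all project to the same $h$-geodesic $\sigma$ from $A$ to $B$ and therefore lie in the single $\partial_t$-invariant timelike 2-surface $\{(t,\sigma(\rho))\}$. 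This establishes Property~D for $\hat g$, hence for $g$.

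For the ``only if'' direction I would first deduce Property~C from Property~D and then invoke Proposition~\ref{prop:parallax}. Fix a clock $\gamma_1$ in $V$ and a point $q$ near it, let $\gamma_q$ be the integral curve of $V$ through $q$, and let $S_{1q}$ be the totally umbilic surface between $\gamma_1$ and $\gamma_q$ supplied by Property~D. The key observation is that the tangent plane $T_qS_{1q}$ is spanned by the two rulings of $S_{1q}$ through $q$, and these rulings are exactly the unique future-pointing and past-pointing light rays from $q$ to $\gamma_1$; thus $T_qS_{1q}$ depends only on $q$ and on the worldline $\gamma_1$, not on $\gamma_q$. Since $\gamma_q\subset S_{1q}$, this forces $V(q)\in T_qS_{1q}$, while the past light ray from $\gamma_1$ through $q$ is one of the rulings and is therefore also tangent to $S_{1q}$. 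Consequently the 2-plane distribution spanned by $V$ and by ``the past light ray from $\gamma_1$'' equals $q\mapsto T_qS_{1q}$, and each surface $S_{12}$ (take $\gamma_2$ the integral curve through a given point) is an integral manifold of it, so it is integrable; this is exactly Property~C. By Proposition~\ref{prop:parallax}, $V$ is proportional to a conformal Killing vector field $W$.

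It remains to show that $W$ is irrotational, and this is the step I expect to be the main obstacle. Rescale so that $W$ is a Killing vector field; then $(M,\hat g)$ is locally stationary with $W=\partial_t$. For two close clocks $\gamma_1,\gamma_2$ at spatial points $A,B$, the Killing flow $t\mapsto t+s$ permutes the light rays from $\gamma_1$ to $\gamma_2$, so it maps $S_{12}$ to itself; hence $S_{12}$ is the cylinder over a spatial curve $c_{12}$, and Property~D forces the light rays from $\gamma_2$ to $\gamma_1$ to span the same cylinder, i.e.\ $c_{21}=c_{12}$ as unoriented curves. By Fermat's principle in a stationary spacetime, the spatial path of a light ray from $A$ to $B$ is a geodesic of the Randers-type optical metric $F=\sqrt{\hat\gamma_{ij}v^iv^j}+b_iv^i$ with $b_i=-\hat g_{0i}/\hat g_{00}$, while the spatial path from $B$ to $A$ is a geodesic of $\tilde F(v)=F(-v)$; two such Randers structures have the same unoriented geodesics if and only if $b$ is a closed one-form. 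Since Property~D makes $c_{12}=c_{21}$ for all nearby $A,B$, we get $db=0$, which is exactly the condition for $W=\partial_t$ to be hypersurface-orthogonal. Using the conformal invariance of hypersurface-orthogonality, $W$ is an irrotational conformal Killing vector field of $g$ with $V\propto W$, completing the proof. One may organize this last step alternatively around the time reflection $t\mapsto-t$, whose being an isometry is equivalent both to the coincidence of the two-way light-ray paths and to the local existence of hypersurfaces orthogonal to $W$.
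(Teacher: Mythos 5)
Your proposal is correct, and for the two structural steps of the ``only if'' direction --- deducing Property~C from Property~D via the well-defined tangent-plane distribution of the surfaces $S_{1q}$, and then invoking Proposition~\ref{prop:parallax} to get a conformal Killing field --- it coincides in substance with the paper's proof (the paper phrases the first step through the Frobenius bracket condition for the fields $X$ and $Y$ of future- and past-pointing light rays, but the content is the same). The genuine divergence is in the final step, proving irrotationality. The paper stays infinitesimal: from $V\in\mathrm{span}(X,Y)$ it reads off that light rays from $\gamma$ are received along the same spatial direction in which light rays to $\gamma$ are emitted, invokes the Synge--Pirani ``bouncing photon'' result that this forces the connecting vector of infinitesimally neighboring worldlines to be Fermi transported, and concludes irrotationality since this holds for all pairs. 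You instead rescale to a stationary model with $W=\partial_t$, observe that each $S_{12}$ is a Killing cylinder over a spatial curve, and use Fermat's principle to reduce Property~D to the reversibility of the Randers optical metric, which holds iff $db=0$, i.e.\ iff $W$ is hypersurface-orthogonal. This is a valid alternative; its one load-bearing unproved claim is the Randers reversibility criterion, which you should either reference or justify (it follows by subtracting the Euler--Lagrange equations of $\alpha+\beta$ and $\alpha-\beta$: the difference is a magnetic force term $\iota_{\dot c}\,db$ whose component transverse to $\dot c$ must vanish for every direction at every point, forcing $db=0$). Your approach buys a more explicitly computational and self-contained argument tied to the stationary structure already obtained; the paper's buys brevity and avoids Finsler machinery by appealing to the classical bouncing-photon characterization of Fermi transport. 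Likewise, your ``if'' direction is a direct verification in the ultrastatic model (null geodesics project to geodesics of the spatial metric, so both families of light rays sweep out the same Killing cylinder), whereas the paper simply cites Proposition~3 of Foertsch, Hasse and Perlick; your version is self-contained and equally valid.
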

\begin{proof}
The proof of the ``if'' part follows from Foertsch, Hasse and Perlick 
\cite{FoertschHassePerlick2003}, Proposition 3. To prove the ``only if'' part, 
fix any event $p$ and let $\gamma$ be the clock in $V$ that passes through 
$p$. On a neighborhhood of $p$, with the worldline of $\gamma$ omitted, consider
two vector fields $X$ and $Y$ such that the integral curves of $X$ are 
future-pointing light rays and the integral curves of $Y$ are past-pointing
light rays issuing from the worldline of $\gamma$. This condition fixes $X$
and $Y$ uniquely up to nowhere vanishing scalar factors. Property D requires
$X$ and $Y$ to be surface forming and $V$ to be tangent to these surfaces.
The first condition is true, by the well-known Frobenius theorem, if and 
only if the Lie bracket of $X$ and $Y$ is a linear combination of $X$ and 
$Y$, and the second condition is true if and only if $V$ is a linear 
combination of $X$ and $Y$. As a consequence, the Lie bracket of $Y$ and 
$V$ must be a linear combination of $Y$ and $V$, i.e., $Y$ and $V$ must be
surface forming. This proves that $V$ must satisfy Property C. Hence,
by Proposition \ref{prop:parallax}, $V$ must be proportional to a conformal 
Killing vector field. What remains to be shown is that this conformal Killing
vector field is irrotational, i.e., hypersurface orthogonal. To that end we
come back to the observation that $V$ is a linear combination of $X$ and $Y$.
This means that, for any integral curve of $V$ in the considered neighborhood,
light rays from $\gamma$ are seen in the same spatial direction in which
light rays to $\gamma$ are emitted. This is true, in particular, for integral 
curves of $V$ that are infinitesimally close to $\gamma$. Synge \cite{Synge1960}
and, in a simplified way, Pirani \cite{Pirani1965} have shown that this 
``bouncing photon construction'' implies that the connecting vector between
the two worldlines is Fermi transported. This is true for \emph{all} pairs
of infinitesimally neighboring worldlines of $V$ if and only if $V$ is irrotational.
 This completes the proof of Proposition \ref{prop:umbilic}.
\end{proof}

The important result to be kept in mind is that, in a spacetime that is not
conformally stationary, it is impossible to find an observer field that
satisfies any of the four Properties A, B, C, D. This demonstrates that
several features of the radar method which intuitively might be taken for
granted, are actually not satisfied in many cases of interest. 


\end{document}